\def\bp{\boldsymbol{p}}
\def\br{\boldsymbol{r}}
\def\bz{\boldsymbol{z}}
\def\bB{\boldsymbol{B}}
\def\bI{\boldsymbol{I}}
\def\bP{\boldsymbol{P}}
\def\bR{\boldsymbol{R}}
\def\bmu{\boldsymbol{\mu}}
\def\bSigma{\boldsymbol{\Sigma}}
\def\btau{\boldsymbol{\tau}}
\newtheorem{theorem}{Theorem}
\newtheorem{lemma}{Lemma}
\theoremstyle{definition}
\algnewcommand\algorithmicinput{\textbf{Input:}}
\algnewcommand\Input{\item[\algorithmicinput]}
\algnewcommand\algorithmicoutput{\textbf{Output:}}
\algnewcommand\Output{\item[\algorithmicoutput]}
\algnewcommand\algorithmicinit{\textbf{Initialize:}}
\algnewcommand\Init{\item[\algorithmicinit]}
\newcommand*{\rom}[1]{\expandafter\@slowromancap\romannumeral #1@}
\begin{document}

\title{Covariance Recovery for One-Bit Sampled Data
With Time-Varying Sampling Thresholds--- \\ \noindent Part~\rom{2}: Non-Stationary Signals}

\author{Arian Eamaz, \IEEEmembership{Student Member, IEEE}, Farhang Yeganegi, and \\ Mojtaba Soltanalian, \IEEEmembership{Senior Member, IEEE}
\thanks{This work was supported in part by the National Science Foundation Grants CCF-1704401 and ECCS-1809225. Parts of this work were presented at the International Conference on Acoustics, Speech and Signal Processing
(ICASSP) 2021, held in Toronto, Canada \cite{eamaz2021modified}. The first two authors contributed equally to this work.}
\thanks{A. Eamaz, F. Yeganegi and M. Soltanalian are with the Dept. of Electrical and Computer Engineering, University of Illinois Chicago, Chicago, USA (e-mails: aeamaz2@uic.edu, fyegan2@uic.edu).}
}

\markboth{IEEE TRANSACTIONS ON SIGNAL PROCESSING , 2022}
{Shell \MakeLowercase{\textit{et al.}}: Bare Demo of IEEEtran.cls for IEEE Journals}
\maketitle

\begin{abstract}
The recovery of the input signal covariance values from its one-bit sampled counterpart has been deemed a challenging task in the literature. To deal with its difficulties, some assumptions are typically made to find a relation between the input covariance matrix and the autocorrelation values of the one-bit sampled data. This includes the arcsine law and the modified arcsine law that were discussed in Part~\rom{1} of this work \cite{eamazpart1}. We showed that by facilitating the deployment of time-varying thresholds, the modified arcsine law has a promising performance in covariance recovery. However, the modified arcsine law also assumes input signals are stationary, which is typically a simplifying assumption for real-world applications. In fact, in many signal processing applications, the input signals are readily known to be non-stationary with a non-Toeplitz covariance matrix. In this paper, we propose an approach to extending the arcsine law to the case where one-bit ADCs apply time-varying thresholds while dealing with input signals that originate from a non-stationary process. In particular, the recovery methods are shown to accurately recover the time-varying variance and autocorrelation values. Furthermore, we extend the formulation of the Bussgang law to the case where non-stationary input signals are considered.
\end{abstract}

\begin{IEEEkeywords}
Arcsine law, Bussgang law,  covariance matrix, one-bit quantization, modified arcsine law, non-stationary signals, time-varying thresholds, time-varying signal statistics.
\end{IEEEkeywords}

\IEEEpeerreviewmaketitle

\section{Introduction}
\IEEEPARstart{C}{ovariance} matrix recovery plays an important role in statistical signal processing applications such as directions of arrival (DOA) estimation, radar waveform design, target parameter estimation, communication channel estimation, and adaptive radar detection  \cite{maronna1976robust,weiss2020blind,bose2021efficient,cheng2017mimo,bekkerman2006target,xu2008target,djelouat2021joint,mahot2013asymptotic}. When digital signal processing is concerned, using one-bit quantization and digitization, in which the input signals are compared with given threshold levels, allows for sampling at a very high rate and with lower energy consumption \cite{instrumentsanalog,mezghani2018blind,ameri2018one,sedighi2020one}. As a result of employing one-bit sampling, however,  we can only use the sign data as partial available information to recover the signal covariance, and  second order statistics in general, making it more challenging. In \cite{van1966spectrum,jacovitti1987methods,jacovitti1994estimation,bar2002doa}, the authors have considered the input signal as a stationary zero-mean Gaussian process, and with this assumption, the input covariance is recovered by taking advantage of the \emph{arcsine law} which connects the covariance of an unquantized signal with that of its quantized counterpart \cite{liu2017one,ameri2018one}. In \cite{bussgang1952crosscorrelation}, the relationship between the cross-correlation matrix of the input signal and the one-bit output data is characterized as the \emph{Bussgang law} for the stationary zero-mean Gaussian signals. Note that the sampling threshold levels are considered to be zero in these research efforts. The zero threshold values can give rise to some difficulties in signal amplitude recovery and a considerable portion of signal information may be lost. As a natural alternative, time-varying thresholds are utilized in recent works which can lead to enhancements in recovery performance \cite{eamaz2021modified,qian2017admm,gianelli2016one,khobahi2020model,khobahi2018signal,wang2017angular,xi2020gridless}.\par
Owing to the successful performance of time-varying sampling thresholds for signals amplitude recovery, such time-varying thresholds were considered for the covariance recovery problem \cite{eamaz2021modified}, and more extensively in Part~\rom{1} of this work \cite{eamazpart1}, exhibiting a significantly improved performance in the estimation of signal autocorrelation values via a \emph{modified arcsine law}.
Moreover, taking time-varying thresholds into consideration for cross-correlation matrix recovery, promising results were demonstrated with a \emph{modified Bussgang law} in Part~\rom{1} of this work \cite{eamazpart1}.\par
A critical restriction of the arcsine law, as well as the modified arcsine law, lies in the necessary assumption of a stationary input signal\cite{van1966spectrum,eamaz2021modified,eamazpart1}. In real-world communication and digital signal processing applications, however, input signals are non-stationary in general and have time-varying variances \cite{haykin1995nonlinear,hu2017enhanced,grenier1983time}. In such scenarios, covarince recovery is an even more prominent tool in the analysis of non-stationary processes and systems, and can provide useful insights into their innate dynamics \cite{tsai2020non,engle2019large,fox2015bayesian}.
Nevertheless, in a non-stationary environment, the expected accuracy of covariance recovery is typically diminished.  

In this paper, we present an approach to extend our modified arcsine law for time-varying sampling thresholds, discussed in Part~\rom{1} of this work \cite{eamazpart1}, to recover signal covariance matrices with an arbitrary non-Toeplitz structure. Moreover, a Bussgang law with time-varying thresholds is established for the non-stationary scenario.

\subsection{Contribution of the Paper}
We study the covariance recovery for a non-stationary input signals in one-bit quantization systems deploying time-varying thresholds. In particular, we formulate an integral-based relation between the autocorrelation function of the one-bit sampled data and the generic covariance matrix entries of the input signal. Moreover, a closed-form formulation for the mean of the input signal is obtained and the utilized to recover the time-varying signal variances. It is demonstrated that to recover the autocorrelation values, we should evaluate the obtained integral which appears to be intractable analytically. To approach this problem, we first employ a one-point piece-wise \text{Padé} approximation (PA) to recast the integrands as rational expressions which are readily integrable. Next, we formulate an estimation criterion to recover the desired input autocorrelation values. The accuracy of the PA is also investigated. In the next step, two well-known numerical integration techniques are employed to estimate the input autocorrelation values; namely, the Gauss-Legendre quadrature and the Monte-Carlo integration techniques. Interestingly, the proposed estimation criteria for these approaches take convex form, facilitating an accelerated recovery. Lastly, a modified Bussgang law for non-stationary input signals is presented. By using the modified Bussgang law, the matrix elements associated with the cross-correlation between the input signal and the one-bit sampled data can be recovered. Several numerical results are presented to illustrate the effectiveness of the proposed methodologies.
\subsection{Organization of the Paper}
Section~\rom{2} is dedicated to formulating the autocorrelation function of the one-bit sampled data with time-varying thresholds in the case of non-stationary inputs. In Section~\rom{3}, the time-varying variances are recovered by using the proposed formula for the mean of the one-bit sampled data. Sections~\rom{4} presents our \text{Padé} Approximation (PA) to recover the input signal autocorrelation sequence. Subsequently,
Sections~\rom{5} and \rom{6} discuss two widely-known numerical integration techniques, i.e. the Gauss-Legendre quadrature and the Monte-Carlo integration methods, applied to our arbitrary non-Toeplitz covariance matrix recovery problem. Section~\rom{7} is where the various methods proposed for covariance recovery are compared. A proper thresholding for covariance recovery through the estimation of the threshold mean is discussed in Section~\rom{8}. The modified Bussgang law for time-varying thresholds in the case of 
non-stationary signals is presented in Section~\rom{9}. Finally, Section~\rom{10} concludes the paper.

\vspace{5pt}

\underline{\emph{Notation:}}
We use bold lowercase letters for vectors, and bold uppercase letters for matrices and uppercase letters for matrix entries. For instance, $\bR_{\mathbf{x}}$ and $R_{\mathbf{x}}(i,j)$ denote the autocorrelation matrix and the $ij$-th element of the autocorrelation matrix of the vector $\mathbf{x}$, respectively. $(\cdot)^{\top}$ and $(\cdot)^{\mathrm{H}}$ denote the vector/matrix transpose, and the Hermitian transpose, respectively. $\mathbb{E}\left\{.\right\}$ denotes the expected value. The Frobenius norm of a matrix $\bB\in \mathbb{C}^{M\times N}$ is defined as $\|\bB\|_{\mathrm{F}}=\sqrt{\sum^{M}_{r=1}\sum^{N}_{s=1}\left|B(r,s)\right|^{2}}$ where $\{B(r,s)\}$ are entries of $\bB$. For an event $\mathcal{E}$, $\mathbb{I}_{(\mathcal{E})}$ is the indicator function for that event; i.e. $\mathbb{I}_{(\mathcal{E})}$ is $1$ if $\mathcal{E}$ occurs and $0$ otherwise. The $Q$-function is defined as
\begin{equation}
\label{eq:135}
\begin{aligned}
Q(x) &= \frac{1}{\sqrt{2 \pi}} \int_{x}^{\infty} \exp \left(-\frac{z^{2}}{2}\right) \,dz.
\end{aligned}
\end{equation}
Further, $Q^{-1}$ is an inverse $Q$-function. The error function (erf) is defined as $
\operatorname{erf} x=\frac{2}{\sqrt{\pi}} \int_{0}^{x} e^{-z^{2}}\,d z$.
The incomplete Gamma function is defined as
\begin{equation}
\label{eq:136}
\Gamma(s, x)=\int_{x}^{\infty} z^{s-1} e^{-z}\,dz.
\end{equation}
Finally, the cumulative distribution function (CDF) of a zero mean Gaussian process $\bz\sim\mathcal{N}(0,\zeta)$ is defined as
\begin{equation}
\label{eq:1bbb}
\Psi(\bz) = \frac{1}{\sqrt{2\pi}}\int^{z}_{-\infty}e^{-\frac{t^{2}}{2\zeta^{2}}} \,dt.
\end{equation}
\section{Covariance Recovery In Non-Stationary Scenario}
We assume that the input signal is a zero-mean non-stationary Gaussian process $\mathbf{x}\sim\mathcal{N}\left(\mathbf{0},\bR_{\mathbf{x}}\right)$, where $\bR_{\mathbf{x}}$ is the non-Toeplitz covariance matrix of $\mathbf{x}$ with the time-varying diagonal elements. Specifically, the input signal is supposed to be non wide sense stationary (WSS) or weak stationary (for abbreviation we use "non-stationary" instead of "non-WSS"). A signal is wide sense stationary (weak stationary) if and only if:
\begin{itemize}
    \item[a.] The signal mean is constant over time.
    \item[b.] $R_{\mathbf{x}}(i,j) = R_{\mathbf{x}}(l)$. This means that we have a constant variance over time; i.e. $R_{\mathbf{x}}(i,i)=R_{\mathbf{x}}(0)$.
    \item[c.] $\mathbb{E}\left\{x^2_{i}\right\}<\infty$ or generally $\mathbb{E}\left\{x^{2n}_{i}\right\}<\infty$ \cite{bollerslev1994arch,kendall1987kendall,hayes2009statistical}.
\end{itemize}
In our case, the input signal can satisfy (a) and (c), however, the autocorrelation function does not rely on lag $R_{\mathbf{x}}(i,j) \neq R_{\mathbf{x}}(l)$, therefore, $\bR_{\mathbf{x}}=\left[R_{\mathbf{x}}(i,j)\right]$ is a non-Toeplitz covariance matrix with distinct diagonal elements (time-varying variance). Consequently, our signal is non-stationary or marginal heteroskedastic \cite{bollerslev1994arch,hayes2009statistical}. A simple but famous example for such signals are those originating in \emph{Wiener processes} or \emph{Brownian motion} $\eta_{t}\sim \mathcal{N}\left(0,t\right)$. The input signal $\mathbf{x}\in \mathbb{R}^{N}$ is considered to be an arbitrary temporal sequence of the distribution ensembles $\left\{\mathbf{x}(k)\right\}$ whith $k \in \left\{1,\cdots,N_{\mathbf{x}}\right\}$.
\subsection{Modified Arcsine Law For Non-Stationary Input Signals}
We consider a non-zero time-varying Gaussian threshold~$\btau$ that is independent of the input signal with the distribution $\btau\sim\mathcal{N}\left(\mathbf{d}=\mathbf{1}d,\bSigma\right)$, and define a new random process $\mathbf{w}$ such that $\mathbf{w}=\mathbf{x}-\btau$. Clearly, $\mathbf{w}$ is a non-stationary Gaussian process with $\mathbf{w}\sim\mathcal{N}\left(-\mathbf{d},\bR_{x}+\bSigma=\bP\right)$ where $\bP$ is a non-Toeplitz matrix. The autocorrelation function of the one-bit quantized output process is formulated in the following.
\begin{theorem}
\label{theorem_2}
Suppose $p_{0i}=\bP(i,i)$, $p_{0j}=\bP(j,j)$ and $p_{ij}=\bP(i,j)$, where $\bP$ is the covariance matrix of $\mathbf{w}$. Consider the one-bit quantized random variable $\mathbf{y}=f(\mathbf{w})$ where $f(.)$ is the sign function. Then, the autocorrelation function of $\mathbf{y}$ takes the form
\begin{equation}
\label{eq:114}
\begin{aligned}
R_{\mathbf{y}}(i,j)=\frac{e^{\frac{-d^2(p_{0i}+p_{0j}-2p_{ij})}{2(p_{0i}p_{0j}-p_{ij}^2)}}}{\pi\sqrt{\left(p_{0i}p_{0j}-p_{ij}^{2}\right)}}\left\{ \int_{0}^{\frac{\pi}{2}} \frac{1}{\beta_{n}}+\sqrt{\frac{\pi}{\beta_{n}}} \frac{\alpha_{n}}{2\beta_{n}} e^{\frac{\alpha_{n}^{2}}{4 \beta_{n}}}\right.\\\left.-\sqrt{\frac{\pi}{\beta_{n}}} \frac{\alpha_{n}}{\beta_{n}} Q\left(\frac{\alpha_{n}}{\sqrt{2 \beta_{n}}}\right) e^{\frac{\alpha_{n}^{2}}{4 \beta_{n}}} d \theta\right\}-1,
\end{aligned}
\end{equation}
where $\alpha_{n}$ and $\beta_{n}$ are evaluated as
\begin{equation}
\label{eq:115}
\begin{aligned}
\alpha_{n} &= \frac{d\left(p_{0i}\sin\theta+p_{0j}\cos\theta-p_{ij}(\cos\theta+\sin\theta)\right)}{(p_{0i}p_{0j}-p_{ij}^2)},\\
\beta_{n} &= \frac{p_{0j}\cos^2\theta+p_{0i}\sin^2\theta-p_{ij}\sin 2\theta}{2(p_{0i}p_{0j}-p_{ij}^2)}.
\end{aligned}
\end{equation}
\end{theorem}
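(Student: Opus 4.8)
The plan is to compute $R_{\mathbf{y}}(i,j) = \mathbb{E}\{\operatorname{sgn}(w_i)\operatorname{sgn}(w_j)\}$ by directly integrating the bivariate Gaussian density of $(w_i, w_j)$ over the four sign-quadrants. First I would write $\mathbb{E}\{\operatorname{sgn}(w_i)\operatorname{sgn}(w_j)\} = 2\,\mathbb{P}(w_i > 0, w_j > 0) + 2\,\mathbb{P}(w_i < 0, w_j < 0) - 1$, or equivalently reduce the whole thing to a single orthant probability using the symmetry of the Gaussian under $(w_i,w_j)\mapsto(-w_i,-w_j)$ after accounting for the nonzero mean $(-d,-d)$. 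The joint density of $(w_i,w_j)$ is the standard bivariate normal with mean $(-d,-d)$ and covariance $\begin{pmatrix} p_{0i} & p_{ij} \\ p_{ij} & p_{0j}\end{pmatrix}$, whose determinant is $p_{0i}p_{0j}-p_{ij}^2$; this is exactly where the prefactor $\frac{1}{\pi\sqrt{p_{0i}p_{0j}-p_{ij}^2}}$ and the exponential $e^{-d^2(p_{0i}+p_{0j}-2p_{ij})/(2(p_{0i}p_{0j}-p_{ij}^2))}$ in \eqref{eq:114} must come from — the latter is the value of the (unnormalized) Gaussian at the shifted origin.

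Next I would pass to polar coordinates $w_i = r\cos\theta$, $w_j = r\sin\theta$ on the first quadrant $\theta \in [0,\pi/2]$, $r \in [0,\infty)$, with area element $r\,dr\,d\theta$. Expanding the quadratic form in the exponent in these coordinates produces a term quadratic in $r$, a term linear in $r$ (coming from the cross terms between the mean vector and $(w_i,w_j)$), and a constant; the coefficient of $r^2$ is $\beta_n$ and the coefficient of $-r$ (after pulling out the prefactor) is $\alpha_n$ as defined in \eqref{eq:115} — one should verify these match by collecting the $\cos^2\theta$, $\sin^2\theta$, $\sin 2\theta$, $\sin\theta$, $\cos\theta$ coefficients against the inverse covariance matrix entries. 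The inner integral then has the shape $\int_0^\infty r\, e^{-\beta_n r^2 + \alpha_n r}\, dr$. This one-dimensional integral is evaluated by completing the square in the exponent, $-\beta_n r^2 + \alpha_n r = -\beta_n(r - \tfrac{\alpha_n}{2\beta_n})^2 + \tfrac{\alpha_n^2}{4\beta_n}$, and then splitting $r = (r - \tfrac{\alpha_n}{2\beta_n}) + \tfrac{\alpha_n}{2\beta_n}$: the first piece integrates to $\tfrac{1}{2\beta_n}$ times a Gaussian tail that becomes a $Q$-function evaluated at $-\tfrac{\alpha_n}{\sqrt{2\beta_n}}$ (using $1 - Q(-t) = Q(t)$ to flip signs), and the second piece gives the $\sqrt{\pi/\beta_n}\,\tfrac{\alpha_n}{2\beta_n} e^{\alpha_n^2/(4\beta_n)}$ term. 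Collecting, the inner integral equals $\frac{1}{\beta_n} + \sqrt{\frac{\pi}{\beta_n}}\frac{\alpha_n}{2\beta_n}e^{\alpha_n^2/(4\beta_n)} - \sqrt{\frac{\pi}{\beta_n}}\frac{\alpha_n}{\beta_n}Q\!\left(\frac{\alpha_n}{\sqrt{2\beta_n}}\right)e^{\alpha_n^2/(4\beta_n)}$, matching the integrand in \eqref{eq:114}; integrating the remaining $d\theta$ over $[0,\pi/2]$ and subtracting $1$ finishes it.

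The main obstacle I anticipate is the bookkeeping in two places: (i) correctly identifying that only \emph{one} quadrant integral (times $2$, with a mean-shift symmetry argument) needs to be handled, so that the final answer carries a single $\int_0^{\pi/2}$ rather than a sum of four quadrant integrals — the nonzero, equal-component mean $-\mathbf{d} = -\mathbf{1}d$ is what makes this collapse work cleanly, and getting the signs right in the $Q$-function argument is delicate; and (ii) verifying the algebraic identities that the polar-coordinate coefficients of the inverse covariance quadratic form are precisely the stated $\alpha_n, \beta_n$ — this is routine but error-prone, since it requires inverting the $2\times 2$ covariance and carefully tracking the linear-in-$r$ contribution of the mean. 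I would also note that the formula implicitly requires $\beta_n > 0$ for all $\theta \in [0,\pi/2]$, which follows from positive-definiteness of $\bP$ (equivalently $p_{0i}p_{0j} - p_{ij}^2 > 0$ together with $p_{0i}, p_{0j} > 0$), so the inner Gaussian integral converges; this should be stated explicitly as a validity condition.
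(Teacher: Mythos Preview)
Your overall strategy---quadrant decomposition, polar coordinates on the first quadrant, then closed-form evaluation of the radial Gaussian integral---is exactly the route the paper takes. However, there is a genuine gap in your reduction step that would prevent the computation from closing.

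You assert that the mean-shift symmetry lets you ``reduce the whole thing to a single orthant probability,'' and later that ``only \emph{one} quadrant integral (times $2$) needs to be handled \dots\ the nonzero, equal-component mean $-\mathbf{1}d$ is what makes this collapse work cleanly.'' This is backwards: the nonzero mean is precisely what \emph{prevents} the collapse. The reflection $(w_i,w_j)\mapsto(-w_i,-w_j)$ sends the density with mean $-\mathbf{1}d$ to the density with mean $+\mathbf{1}d$, so
\[
\mathbb{P}(w_i<0,w_j<0;\,-d)=\mathbb{P}(w_i>0,w_j>0;\,+d)\neq \mathbb{P}(w_i>0,w_j>0;\,-d).
\]
Hence $R_{\mathbf y}(i,j)=2\kappa\displaystyle\int_{0}^{\infty}\!\!\int_{0}^{\infty}\bigl(e^{\lambda(d)}+e^{\lambda(-d)}\bigr)\,dw_i\,dw_j-1$, which in polar coordinates gives the radial integral
\[
\int_{0}^{\infty}\rho\,e^{-\beta_n\rho^{2}}\bigl(e^{-\alpha_n\rho}+e^{+\alpha_n\rho}\bigr)\,d\rho,
\]
not the single $\int_{0}^{\infty}\rho\,e^{-\beta_n\rho^{2}+\alpha_n\rho}\,d\rho$ you wrote. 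This is exactly how the paper proceeds. Your claimed evaluation of that single integral as $\tfrac{1}{\beta_n}+\sqrt{\pi/\beta_n}\,\tfrac{\alpha_n}{2\beta_n}e^{\alpha_n^{2}/(4\beta_n)}-\sqrt{\pi/\beta_n}\,\tfrac{\alpha_n}{\beta_n}Q(\cdot)e^{\alpha_n^{2}/(4\beta_n)}$ cannot be right: completing the square in $\int_{0}^{\infty}\rho\,e^{-\beta_n\rho^{2}+\alpha_n\rho}\,d\rho$ alone yields $\tfrac{1}{2\beta_n}+\sqrt{\pi/\beta_n}\,\tfrac{\alpha_n}{2\beta_n}e^{\alpha_n^{2}/(4\beta_n)}\bigl(1-Q(\alpha_n/\sqrt{2\beta_n})\bigr)$, which has the wrong constant term and the wrong coefficient on the $Q$-term. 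Only after adding the companion integral with $-\alpha_n$ and using $Q(-x)+Q(x)=1$ do the three pieces combine into the integrand of the theorem. (Your description of which piece produces the $Q$-function is also swapped: the odd part $(r-\alpha_n/(2\beta_n))$ integrates to a pure $\tfrac{1}{2\beta_n}$ with no tail; it is the constant part $\alpha_n/(2\beta_n)$ that produces the Gaussian tail and hence the $Q$-function.)

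Once you keep both $\pm\alpha_n$ contributions the rest of your outline is correct and matches the paper line for line, including the identification of $\alpha_n,\beta_n$ from the inverse covariance and the positivity of $\beta_n$ on $[0,\pi/2]$.
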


\begin{proof}
The autocorrelation value associated with lags $i$ and $j$ is given by
\begin{equation}
\label{eq:27}
R_{\mathbf{y}}(i,j) = \kappa \int_{-\infty}^{\infty} \int_{-\infty}^{\infty}\hspace{-.1cm} f(w_i)f(w_j)e^{\lambda(d)} dw_i dw_j
\end{equation}
where $\kappa$ and $\lambda(d)$ are defined as
\begin{equation}
\label{eq:86}
\kappa \triangleq \left(2\pi\sqrt{p_{0i}p_{0j}-p_{ij}^2}\right)^{-1},
\end{equation}
\begin{equation}
\label{eq:28}
\lambda(d)\hspace{-.1cm}\triangleq\frac{(w_i+d)^2p_{0j}+(w_j+d)^2p_{0i}-2p_{ij}(w_i+d)(w_j+d)}{-2(p_{0i}p_{0j}-p_{ij}^2)}.
\end{equation}
The autocorrelation function in (\ref{eq:27}) can be rewritten as
\begin{equation}
\label{eq:29}
\begin{aligned}
R_{\mathbf{y}}(i,j) = & \kappa \left(\int_{0}^{\infty} \int_{0}^{\infty} e^{\lambda(d)} \,dw_i \,dw_j\right.\\
&+\int_{-\infty}^{0} \int_{-\infty}^{0} e^{\lambda(d)} \,dw_i \,dw_j \\
&-\int_{0}^{\infty} \int_{-\infty}^{0} e^{\lambda(d)} \,dw_i \,dw_j \\
&\left.-\int_{-\infty}^{0} \int_{0}^{\infty} e^{\lambda(d)} \,dw_i \,dw_j\right).
\end{aligned}
\end{equation}
We can simplify (\ref{eq:29}) using the relation $\kappa\int_{-\infty}^{\infty}\int_{-\infty}^{\infty} e^{\lambda(d)} \,dw_i\,dw_j=1$.
In fact, one can verify that
\begin{equation}
\label{eq:31}
R_{\mathbf{y}}(i,j) = 2\kappa \int_{0}^{\infty}\hspace{-.1cm} \int_{0}^{\infty} \hspace{-.1cm}\left(e^{\lambda(d)}+e^{\lambda(-d)}\right) \,dw_i\,dw_j -1.
\end{equation}
By employing polar coordinates $w_i=\rho \cos \theta$, $w_j=\rho \sin \theta$, we can recast the integral in (\ref{eq:31}) as
\begin{equation}
\label{eq:32}
\begin{aligned}
R_{\mathbf{y}}(i,j) = \chi \int_{0}^{\frac{\pi}{2}} \hspace{-.1cm} \int_{0}^{\infty} \hspace{-.1cm} e^{-\beta_{n}\rho^2} \hspace{-.1cm}\left(e^{-\alpha_{n}\rho}+e^{\alpha_{n}\rho}\right)\rho\,d\rho\,d\theta -1,
\end{aligned}
\end{equation}
where $\alpha_{n}$ and $\beta_{n}$ are readily defined in (\ref{eq:115}), and
\begin{equation}
\label{eq:79}
\chi \triangleq 2\kappa e^{\frac{-d^2(p_{0i}+p_{0j}-2p_{ij})}{2(p_{0i}p_{0j}-p_{ij}^2)}}.
\end{equation}
Integrating (\ref{eq:32}) with respect to $\rho$ leads to
\begin{equation}
\label{eq:35}
\begin{aligned}
R_{\mathbf{y}}(i,j)=\frac{e^{\frac{-d^2(p_{0i}+p_{0j}-2p_{ij})}{2(p_{0i}p_{0j}-p_{ij}^2)}}}{\pi\sqrt{\left(p_{0i}p_{0j}-p_{ij}^{2}\right)}}\left\{ \int_{0}^{\frac{\pi}{2}} \frac{1}{\beta_{n}}+\sqrt{\frac{\pi}{\beta_{n}}} \frac{\alpha_{n}}{2\beta_{n}} e^{\frac{\alpha_{n}^{2}}{4 \beta_{n}}}\right.\\\left.-\sqrt{\frac{\pi}{\beta_{n}}} \frac{\alpha_{n}}{\beta_{n}} Q\left(\frac{\alpha_{n}}{\sqrt{2 \beta_{n}}}\right) e^{\frac{\alpha_{n}^{2}}{4 \beta_{n}}} d \theta\right\}-1.
\end{aligned}
\end{equation}
\end{proof}

It remains to evaluate the integral in (\ref{eq:114}) in terms of $\{p_{0i}\}$, $\{p_{0j}\}$ and $\{p_{ij}\}$, which have to be estimated---a task that is central to our efforts in the rest of this paper. Finding $\{p_{0i}\}$, $\{p_{0j}\}$ and $\{p_{ij}\}$ results in time-varying input variance and autocorrelation recovery, which can be achieved by considering the relation:
\begin{equation}
\label{eq:36}
\bR_{\mathbf{x}}(i,j) = \bP(i,j)-\bSigma(i,j).
\end{equation}
For $i=j$, the input variance is hence given by $\bR_{\mathbf{x}}(i,i) = r_{0i} = p_{0i}-\bSigma(i,i)$ and $\bR_{\mathbf{x}}(j,j) = r_{0j} = p_{0j}-\bSigma(j,j)$, while for $i\neq j$, we have the input autocorrelation as $\bR_{\mathbf{x}}(i,j) = r_{ij} = p_{ij}-\bSigma(i,j)$.
\section{Time-Varying Variance Recovery}
\label{tvv}
To recover the time-varying variances $\{r_{0i}\}$, the following lemma would be useful.
\begin{lemma}
\label{lemma_1}
The first moment (mean) of the one-bit sampled data, depends on the threshold distribution and the power of sampled data via the relation 
\begin{equation}
\label{rem1}
 \mathbb{E}\left\{y_{i}\right\} = 2Q\left(\frac{d}{\sqrt{p_{0i}}}\right)-1, \quad \forall i\in \{1,\cdots,N\}.
\end{equation}
\begin{proof}
We have
\begin{equation}
\label{pr1-1}
\mathbb{E}\left\{y_{i}\right\} = \int_{-\infty}^{+\infty} f(w_{i}) p(w_{i}) \,dw_{i}
\end{equation}
for $i\in \{1,\cdots,N\}$, with $p(w_{i})=\left(\sqrt{2\pi p_{0i}}\right)^{-1} e^{\frac{-\left(w_{i}+d\right)^{2}}{2p_{0i}}}$. We can further simplify (\ref{pr1-1}) as
\begin{equation}
\label{pr1-2}
\begin{aligned}
\mathbb{E}\left\{y_{i}\right\} &= -\int_{-\infty}^{0} p(w_{i}) \,dw_{i}+\int_{0}^{\infty} p(w_{i}) \,dw_{i}\\ &= 2\int_{0}^{+\infty} p(w_{i}) \,dw_{i}-1\\ &=2Q\left(\frac{d}{\sqrt{p_{0i}}}\right)-1
\end{aligned}
\end{equation}
which completes the proof.
\end{proof}
\end{lemma}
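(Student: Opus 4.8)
The plan is to evaluate the expectation directly from its definition, exploiting the fact that the one-bit map $f$ is the sign function and that $w_i$ is a scalar Gaussian with known mean and variance. First I would write $\mathbb{E}\{y_i\} = \mathbb{E}\{f(w_i)\} = \int_{-\infty}^{\infty} f(w_i)\, p(w_i)\, dw_i$, where $p(w_i)$ is the marginal density of the $i$-th coordinate of $\mathbf{w} = \mathbf{x} - \btau$. Since $\btau$ is independent of $\mathbf{x}$ with $\mathbb{E}\{\btau\} = \mathbf{1}d$, this coordinate satisfies $w_i \sim \mathcal{N}(-d, p_{0i})$ with $p_{0i} = \bP(i,i)$, i.e. $p(w_i) = (2\pi p_{0i})^{-1/2}\exp\!\big(-(w_i+d)^2/(2p_{0i})\big)$.

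Next, since $f(w_i) = +1$ for $w_i > 0$ and $f(w_i) = -1$ for $w_i < 0$ (the boundary $w_i = 0$ has measure zero), I would split the integral as $\mathbb{E}\{y_i\} = \int_0^\infty p(w_i)\,dw_i - \int_{-\infty}^0 p(w_i)\,dw_i$, and then use the normalization $\int_{-\infty}^\infty p(w_i)\,dw_i = 1$ to eliminate the negative-axis piece, obtaining $\mathbb{E}\{y_i\} = 2\int_0^\infty p(w_i)\,dw_i - 1$.

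The remaining step is to identify $\int_0^\infty p(w_i)\,dw_i$ with a $Q$-function value. Substituting $z = (w_i + d)/\sqrt{p_{0i}}$ maps the lower limit $w_i = 0$ to $z = d/\sqrt{p_{0i}}$ and turns the integrand into the standard normal density, so $\int_0^\infty p(w_i)\,dw_i = \frac{1}{\sqrt{2\pi}}\int_{d/\sqrt{p_{0i}}}^\infty e^{-z^2/2}\,dz = Q\!\big(d/\sqrt{p_{0i}}\big)$, which is exactly the convention in (\ref{eq:135}). Combining the two displays yields $\mathbb{E}\{y_i\} = 2Q\!\big(d/\sqrt{p_{0i}}\big) - 1$ for every $i \in \{1,\dots,N\}$, as claimed.

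There is no substantive obstacle here; the only points requiring care are keeping track of the sign of the mean of $w_i$ (it is $-d$, not $+d$, because $\mathbf{w} = \mathbf{x} - \btau$) and matching the tail convention so that the shift lands inside, rather than outside, the argument of $Q$. As an alternative that sidesteps the explicit integral, one could note that $\mathbb{E}\{y_i\} = \mathrm{Pr}\{w_i > 0\} - \mathrm{Pr}\{w_i < 0\} = 1 - 2\,\mathrm{Pr}\{w_i \le 0\}$ and evaluate $\mathrm{Pr}\{w_i \le 0\}$ directly from the Gaussian CDF of $\mathcal{N}(-d, p_{0i})$, which produces the same closed form.
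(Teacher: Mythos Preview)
Your proof is correct and follows essentially the same route as the paper: write the expectation as the integral of the sign function against the Gaussian density of $w_i\sim\mathcal{N}(-d,p_{0i})$, split at zero, use normalization to reduce to $2\int_0^\infty p(w_i)\,dw_i-1$, and identify the remaining integral with $Q(d/\sqrt{p_{0i}})$. The only difference is that you spell out the standardizing substitution explicitly and add the probabilistic shortcut as an aside, whereas the paper leaves the last identification implicit.
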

In light of the above, a relation between the input variance and the mean of one-bit sampled data is established, which provides an additional avenue to estimate the variances $\left\{p_{0i}\right\}$. More precisely,  
according to Lemma~\ref{lemma_1} and (\ref{eq:36}), the input time-varying variances $\left\{r_{0i}\right\}$ are given by
\begin{equation}
\label{eq:fast_1}
\begin{aligned}
r_{0i}^{\star}&=\left(\frac{d}{Q^{-1}\left(\frac{\mu_{i}+1}{2}\right)}\right)^{2}-\sigma^{2}_{\btau}, \quad i\in \left\{1,\cdots,N\right\},
\end{aligned}
\end{equation}
where $\{r_{0i}^{\star}\}$  denote the optimal values of $\{r_{0i}\}$, $\sigma^{2}_{\btau}$ is the threshold variance, and $\{\mu_{i}\}$ denote the entries of $\bmu$ which may be estimated via the sample mean $\frac{1}{N_{\mathbf{x}}}\sum^{N_{\mathbf{x}}}_{k=1}\mathbf{y}(k)$ \cite{kendall1987kendall}.

\begin{figure*}[t]
	\centering
	\begin{subfigure}[b]{0.45\textwidth}
		\includegraphics[width=1\linewidth]{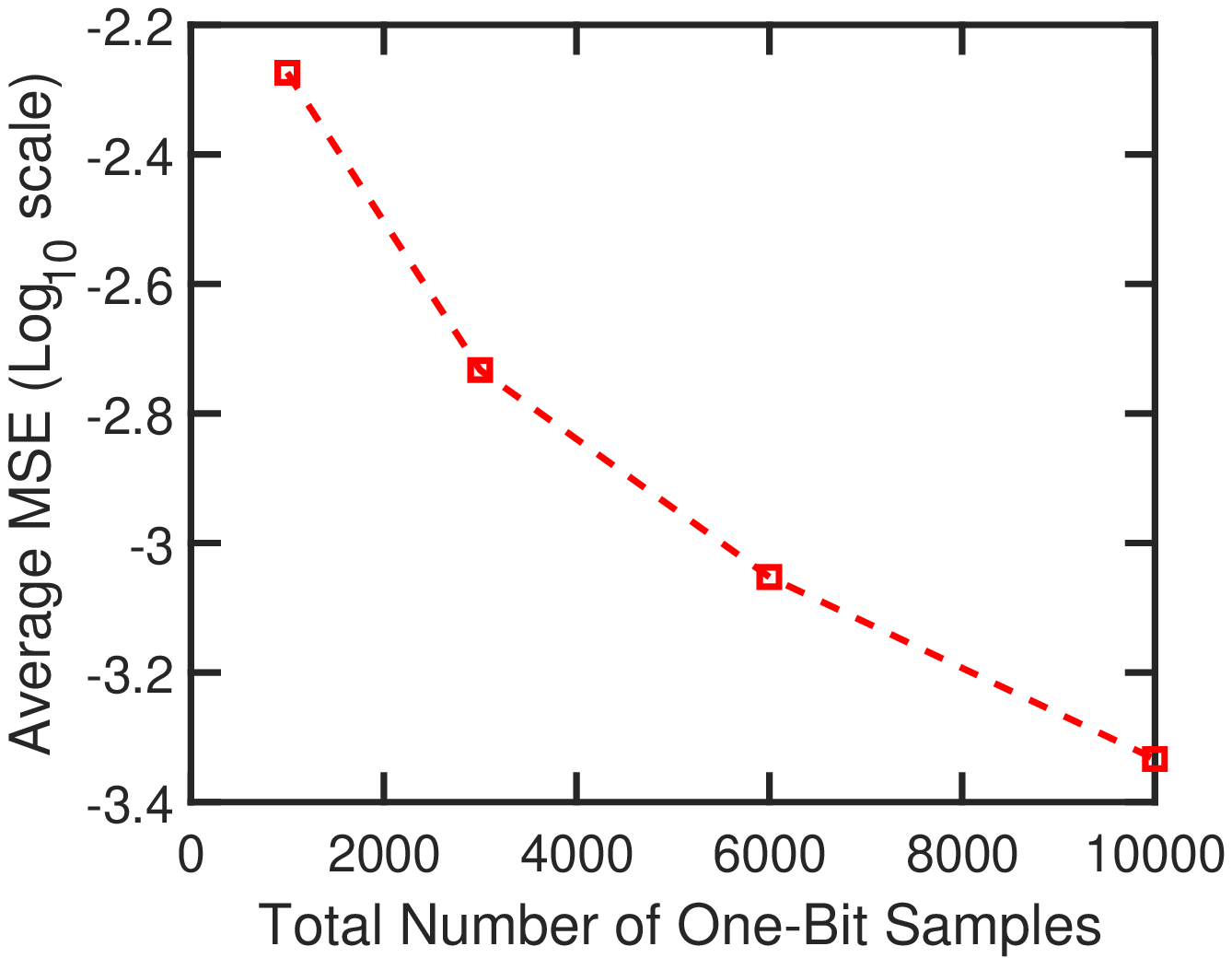}
		\caption{}
	\end{subfigure}
	\begin{subfigure}[b]{0.45\textwidth}
		\includegraphics[width=1\linewidth]{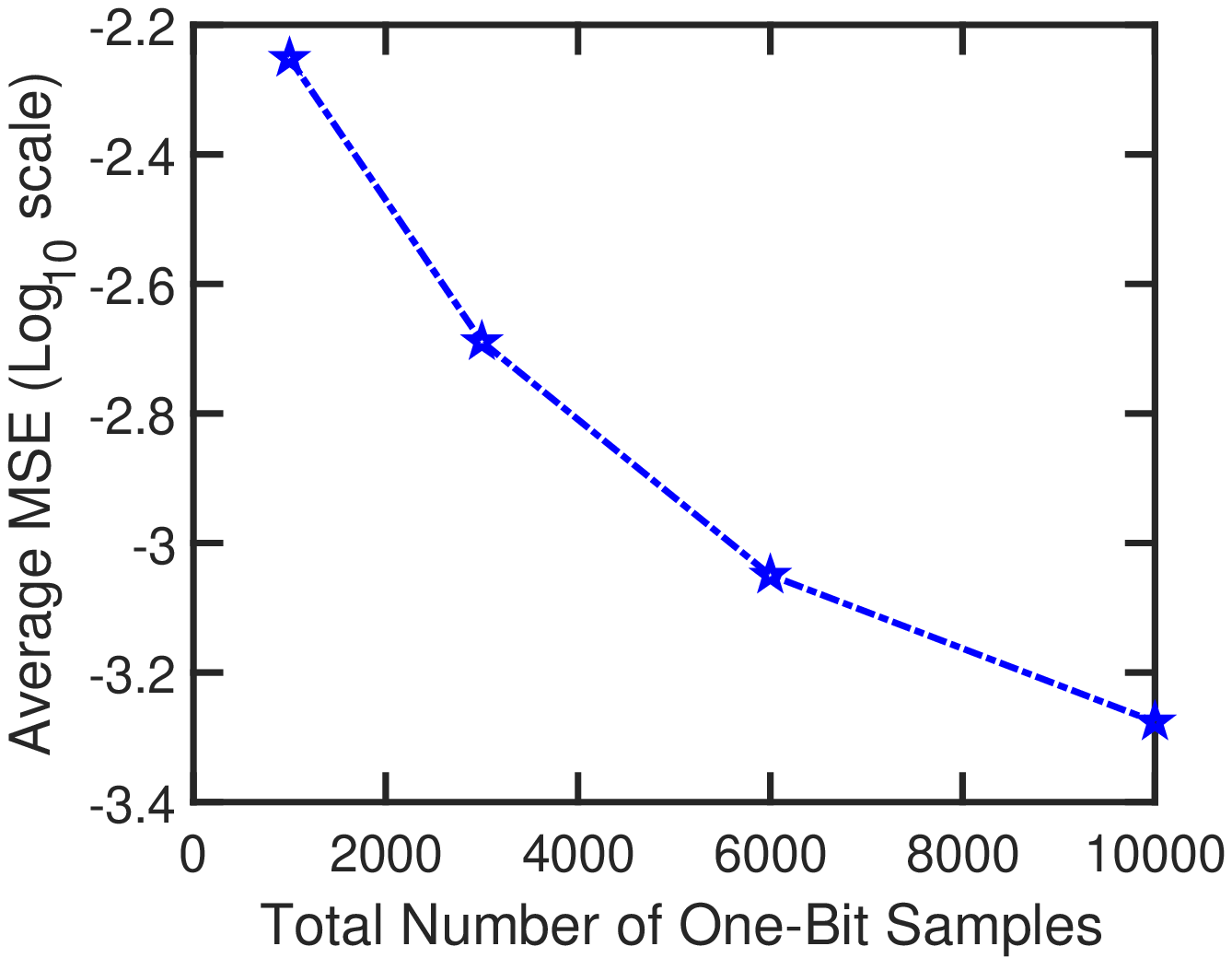}
		\caption{}
	\end{subfigure}
	\caption{Average MSE for time-varying input variance recovery with (a) $i=2$ ($r_{02}$), and (b) $j=8$ ($r_{08}$), for different one-bit sample sizes.}
	\label{figure_5}
\end{figure*}

\subsection{Numerical Results}
We will examine the effectiveness of (\ref{eq:fast_1}) in estimating the time-varying input variances. In all experiments, the input signals were generated as zero-mean Wiener process sequences with time-varying variance ranging from $0.2$ to $0.8$. Also, the number of states is $100$ (i.e., $\mathbf{x}\in \mathbb{R}^{100}$). Accordingly, we made use of the time-varying thresholds with $d=0.5$ and diagonal $\bSigma$ whose diagonal entries are set to $0.2$. In the non-stationary input signal case, the true input variance is not a constant number. Therefore, we define the experimental mean square error (MSE) of the estimate $\hat r_{0i}$ of a variance $r_{0i}$ as
\begin{equation}
\label{eq:46}
\begin{aligned}
\mathrm{MSE} = \frac{1}{E} \sum^{E}_{e=1} |r^{e}_{0i}-{\hat r^{e}_{0i}}|^{2},
\end{aligned}
\end{equation}
where $\left\{r^{e}_{0i},\hat r^{e}_{0i}\right\}$ are the time-varying variances and their estimates in the $e$-th experiment. Also, the number of experiments is assumed to be $E=15$.
As can be seen in Fig.~\ref{figure_5}, we can accurately estimate the time-varying variance elements of an input signal based on (\ref{eq:fast_1}) for $i=2$ and $j=8$ ($r_{02} \neq r_{08}$). The results are obtained for the number of ensembles $N_{\mathbf{x}}\in \left\{ 1000, 3000, 6000, 10000\right\}$, with fixed $d$ and $\bSigma$ for each experiment. It is observed that the accuracy of time-varying variance recovery will significantly improve as the number of one-bit samples grows large.

To further investigate the effectiveness of our proposed approach, we generate a non-stationary Gaussian process according to $\mathbf{x}\sim\mathcal{N}\left(0,\sigma^{2}_{t}\bI\right)$, where $\{\sigma^{2}_{t}\}$ are generated based on the generalized autoregressive conditional heteroskedasticity (GARCH) model with order one, i.e. $\operatorname{GARCH}\left(1,1\right)$, which may be written as \cite{bollerslev1994arch},
\begin{equation}
 \label{garch}
 \sigma^{2}_{t} = \zeta_{0}+\zeta_{1}\sigma^{2}_{t-1}+\zeta_{2}\epsilon^{2}_{t-1}, \quad x_{t} = \epsilon_{t}|\psi_{t-1},
\end{equation}
where $\{x_{t}\}$ are elements of $\mathbf{x}$, $\epsilon_{t}|\psi_{t-1}$ denotes the conditional random variable $\epsilon_{t}$ given its previous ensembles set $\psi_{t-1}$, and $\left\{\zeta_{0},\zeta_{1},\zeta_{2}\right\}$ are our GARCH model parameters. In Fig.~\ref{variance_recovery}, we present an example of time-varying variance sequence recovery. The true input signal time-varying variance $\sigma^{2}_{t}$ and the estimated values by our approach are presented when $t$ is a temporal sequence of length $20$.

So far, we have obtained the time-varying variance elements of the input covariance matrix. To recover the input autocorrelation values ($\{r_{ij}\}, i\neq j $), the integral in (\ref{eq:114}) should be evaluated which appears to be difficult to find in closed-form. Therefore, in the following, we deploy various approximations to facilitate its evaluation, which enables to the recovery of all elements of the covariance matrix $\bR_{\mathbf{x}}$.
\begin{figure}[t]
	\center{\includegraphics[width=0.6\textwidth]{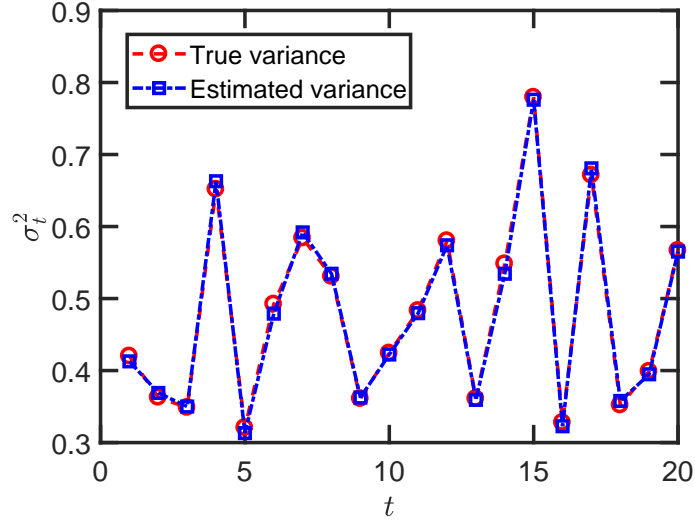}}
	\caption{Recovery of the input signal time-varying variance $\sigma^{2}_{t}$ generated by a $\operatorname{GARCH}\left(1,1\right)$ model based on (\ref{eq:fast_1}), when $t$ is a temporal sequence of length $20$, with the true values plotted along the estimates.}
	\label{variance_recovery}
\end{figure}

\section{Analytic Approach for Covariance Recovery}
\label{sec:analytic}
The first part of the integration in (\ref{eq:114}) can be analytically evaluated as
\begin{equation}
\label{eq:38}
\begin{aligned}
\int_{0}^{\frac{\pi}{2}} \frac{1}{\beta_{n}} \,d\theta &= \sqrt{p_{0i}p_{0j}-p_{ij}^2} \left(\pi+2\sin^{-1}\left[\frac{p_{ij}}{\sqrt{p_{0i}p_{0j}}}\right]\right).
\end{aligned}
\end{equation}
If the threshold is considered as a zero-mean Gaussian process ($\btau\sim \mathcal{N}\left(0,\bSigma\right)$), one can resort to the well-known  \emph{arcsine law} relation for non-stationary signals, i.e.,  $R_{\mathbf{y}}(i,j)=\frac{2}{\pi}\sin^{-1}\left(\frac{p_{ij}}{\sqrt{p_{0i}p_{0j}}}\right)$. However, in the general case, we evaluate all parts of the integration in (\ref{eq:114}). Computing the integral in (\ref{eq:114}) with the integrands $D_{1}\left(\theta;p_{0i},p_{0j},p_{ij},d\right)=\sqrt{\frac{\pi}{\beta_{n}}} \frac{\alpha_{n}}{\beta_{n}} Q\left(\frac{\alpha_{n}}{\sqrt{2 \beta_{n}}}\right) e^{\frac{\alpha_{n}^{2}}{4 \beta_{n}}}$ and $D_{2}\left(\theta;p_{0i},p_{0j},p_{ij},d\right)  =\sqrt{\frac{\pi}{\beta_{n}}} \frac{\alpha_{n}}{2\beta_{n}} e^{\frac{\alpha_{n}^{2}}{4 \beta_{n}}}$ with respect to $\theta$ appears to be a difficult task. Thus, in the following section, the \text{Padé} approximation (PA) \cite{basdevant1972pade,brezinski1995taste,gonnet2013robust} is utilized to approximate said integrands $D_{1}$ and $D_{2}$. This facilitates the recovery of $\{p_{ij}\}$ in Section~\ref{subsec:4}.

\subsection{Padé Approximation}
\label{subsec:3}
As in Part \rom{1} of this work \cite{eamazpart1}, we use PA to approximate $D_{1}$ and $D_{2}$.  Note that the integration in (\ref{eq:114}) occurs in the interval $\theta \in \left[0,\frac{\pi}{2}\right]$. To have a better fitness, we again use the idea of piece-wise PA with three distinct intervals $\left[0,\frac{\pi}{8}\right]$, $\left[\frac{\pi}{8},\frac{3\pi}{8}\right]$, and $\left[\frac{3\pi}{8},\frac{\pi}{2}\right]$ with the expansion points $\theta\in\left\{0,\frac{\pi}{4},\frac{\pi}{2}\right\}$. Consequently, the function $D_{2}\left(\theta;p_{0i},p_{0j},p_{ij},d\right)$ is approximated as
\begin{equation}
\label{eq:37}
\begin{aligned}
\theta &\in \left[0,\frac{\pi}{8}\right] \cup \left[\frac{\pi}{8},\frac{3\pi}{8}\right] \cup \left[\frac{3\pi}{8},\frac{\pi}{2}\right]:\\ D_{2}&=\sqrt{\frac{\pi}{\beta_{n}}} \frac{\alpha_{n}}{2\beta_{n}} e^{\frac{\alpha_{n}^{2}}{4 \beta_{n}}}\approx \frac{e+s\theta}{k+g\theta+h\theta^2},
\end{aligned}
\end{equation}
A similar approximation can be proposed for $D_{1}\left(\theta;p_{0i},p_{0j},p_{ij},d\right)$. It is straightforward to verify that the two functions $D_{1}\left(\theta;p_{0i},p_{0j},p_{ij},d\right)$ and $D_{2}\left(\theta;p_{0i},p_{0j},p_{ij},d\right)$ are analytic at the expansion points. Also, the $Q$-function in (\ref{eq:114}) is approximated by the $\bar{Q}$-function as \cite{chiani2003new}:
\begin{equation}
\label{eq:11}
\bar{Q}\left(x\right) = \frac{1}{12} e^{\frac{-x^2}{2}} +\frac{1}{4} e^{\frac{-2x^2}{3}}, \quad x > 0.
\end{equation}

Substituting $D_{2}\left(\theta;p_{0i},p_{0j},p_{ij},d\right)$ with its approximation and evaluating the integration in the associated parts of (\ref{eq:114}) results in:
\begin{equation}
\label{eq:39}
\begin{aligned}
\int_{0}^{\frac{\pi}{8}}& \sqrt{\frac{\pi}{\beta_{n}}}\frac{\alpha_{n}}{2\beta_{n}}e^{\frac{\alpha_{n}^2}{4\beta_{n}}} \,d\theta \approx \frac{s}{2h}\ln{\left(\frac{\left|k+\frac{\pi g}{8}+\frac{\pi^2h}{64}\right|}{\left|k\right|}\right)}+\\&\frac{2eh-sg}{h\sqrt{4hk-g^2}}\tan^{-1}\left(\frac{\pi h\sqrt{4hk-g^2}}{16hk+\pi gh}\right),
\end{aligned}
\end{equation}
\begin{equation}
\label{eq:40}
\begin{aligned}
\int_{\frac{\pi}{8}}^{\frac{3\pi}{8}}& \sqrt{\frac{\pi}{\beta_{n}}}\frac{\alpha_{n}}{2\beta_{n}}e^{\frac{\alpha_{n}^2}{4\beta_{n}}} \,d\theta \approx \frac{s}{2h} \ln\left(\frac{\left|64k+9\pi ^2 h+24\pi hg\right|}{\left|64k+\pi ^2 h+8\pi hg\right|}\right)+\\&\frac{2eh-sg}{h\sqrt{4kh-g^2}}\tan^{-1}\left(\frac{8\pi h\sqrt{4hk-g^2}}{64kh+3\pi^2h^2+16\pi hg}\right),
\end{aligned}
\end{equation}
\begin{equation}
\label{eq:41}
\begin{aligned}
\int_{\frac{3\pi}{8}}^{\frac{\pi}{2}}& \sqrt{\frac{\pi}{\beta_{n}}}\frac{\alpha_{n}}{2\beta_{n}}e^{\frac{\alpha_{n}^2}{4\beta_{n}}} \,d\theta \approx \frac{s}{2h} \ln\left(\frac{\left|k+\frac{\pi g}{2}+\frac{\pi^2h}{4}\right|}{\left|k+\frac{3\pi g}{8}+\frac{9\pi^2 h}{64}\right|}\right)+\\&\frac{2eh-sg}{h\sqrt{4kh-g^2}}\tan^{-1}\left(\frac{\pi h\sqrt{4hk-g^2}}{16kh+3\pi^2h^2+7\pi hg}\right).
\end{aligned}
\end{equation}
Similar approximations can be obtained for terms associated with the function $D_{1}\left(\theta;p_{0i},p_{0j},p_{ij},d\right)$.


\subsection{Recovery Criterion}
\label{subsec:4}
Based on our discussions in Section~\ref{tvv}, $p_{0i}^{\star}$ and $p_{0j}^{\star}$ may be immediately obtained by (\ref{eq:fast_1}). Then, $\{p_{ij}\}$ are estimated by formulating a minimization problem. For this purpose, one may consider the following criterion:
\begin{equation}
\label{eq:42}
\begin{aligned}
&\bar{G}(p_{0i},p_{0j},p_{ij}) \triangleq \log\left(\left|R_{\mathbf{y}}(i,j)-\chi \left\{ \int_{0}^{\frac{\pi}{2}} \frac{1}{\beta_{n}} \right.\right.\right.\\& \hspace{-.3cm} \left.\left.\left.+\sqrt{\frac{\pi}{\beta_{n}}} \frac{\alpha_{n}}{2\beta_{n}} e^{\frac{\alpha_{n}^{2}}{4 \beta_{n}}}-\sqrt{\frac{\pi}{\beta_{n}}} \frac{\alpha_{n}}{\beta_{n}} Q\left(\frac{\alpha_{n}}{\sqrt{2 \beta_{n}}}\right) e^{\frac{\alpha_{n}^{2}}{4 \beta_{n}}} d \theta\right\}+1\right|^2\right),
\end{aligned}
\end{equation}
where the autocorrelation of output signal ($R_{\mathbf{y}}$) can be estimated via the sample covariance matrix (SCM) \cite{hayes2009statistical},
\begin{equation}
\label{eq:43}
\begin{aligned}
\bR_{\mathbf{y}}\approx \frac{1}{N_{\mathbf{x}}} \sum_{k=1}^{N_{\mathbf{x}}} \mathbf{y}(k) \mathbf{y}(k)^{\mathrm{H}},
\end{aligned}
\end{equation}
with $\{\mathbf{y}(k)\}$ being the observed sign vectors, and $\chi$ being the same as defined in (\ref{eq:79}). Note that by now we have derived an approximated version of~(\ref{eq:114}) using PA. Let $H_{n}(p_{0i},p_{0j},p_{ij})$ denote this approximation. Therefore, we can alternatively consider the criterion:
\begin{equation}
\label{eq:44}
\begin{aligned}
G(p_{ij}) &\triangleq \log\left(\left|R_{\mathbf{y}}(i,j)-H_{n}(p_{0i}^{\star},p_{0j}^{\star},p_{ij})\right|^2\right).
\end{aligned}
\end{equation}
A numerical investigation of \eqref{eq:44} reveals that it is multi-modal, i.e. with multiple local minima---see Fig.~\ref{figure_7} for an example of the optimization landscape of $G(p_{ij})$. Taking the feasible region of $\left\{p_{ij}\right\}$ into account, we can formulate the recovery problem:
\begin{equation}
\label{eq:45}
\begin{aligned}
\mathcal{P}_{i,j}&: &\min_{p_{ij}}& &G(p_{ij}),& &\mbox{s.t.}& &-p_{m}\leq p_{ij} \leq p_{m},
\end{aligned}
\end{equation}
where $p_{m}=\operatorname{min}\{[p_{0i}^{\star},p_{0j}^{\star}]\}$. The non-convex problem in (\ref{eq:45}) may be solved via the gradient descent numerical optimization approach by employing multiple random initial points. Once $p_{ij}$ is estimated, we can estimate the autocorrelation values of $\mathbf{x}$ via (\ref{eq:36}). The acquired optimum recovery results will be presented in the following.

\begin{figure}[t]
	\center{\includegraphics[width=0.6\textwidth]{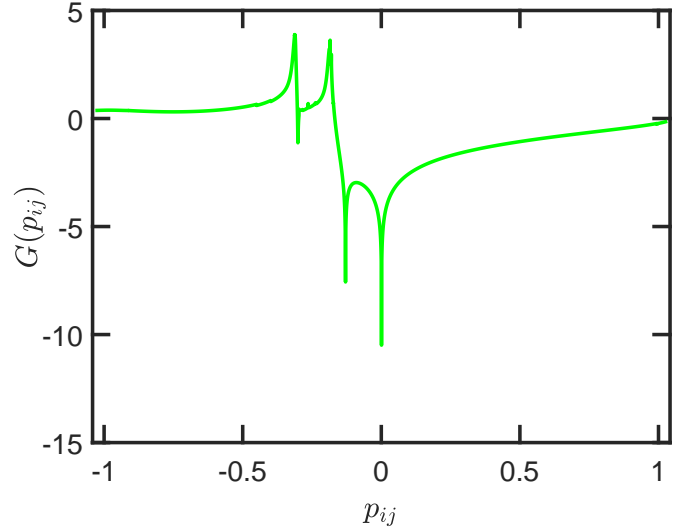}}
	\caption{Example plot of the estimation criterion $G(p_{ij})$ with respect to $p_{ij}$ showing its multi-modality, i.e. having multiple local optima.}
	\label{figure_7}
\end{figure}

\subsection{Numerical Results}
\label{subsec:22}
We will examine the effectiveness of the PA method by comparing its recovery results with the true input signal autocorrelation values in the non-stationary case. In all experiments, the input signals were generated as zero-mean Wiener process sequences with time-varying variance ranging from $0.2$ to $0.8$. The number of states is set to $100$ (i.e., $\mathbf{x}\in \mathbb{R}^{100}$). Accordingly, we make use of the time-varying thresholds with $d=0.5$ and diagonal $\bSigma$ whose diagonal entries are set to $0.2$.

\begin{figure}[t]
	\center{\includegraphics[width=0.6\textwidth]{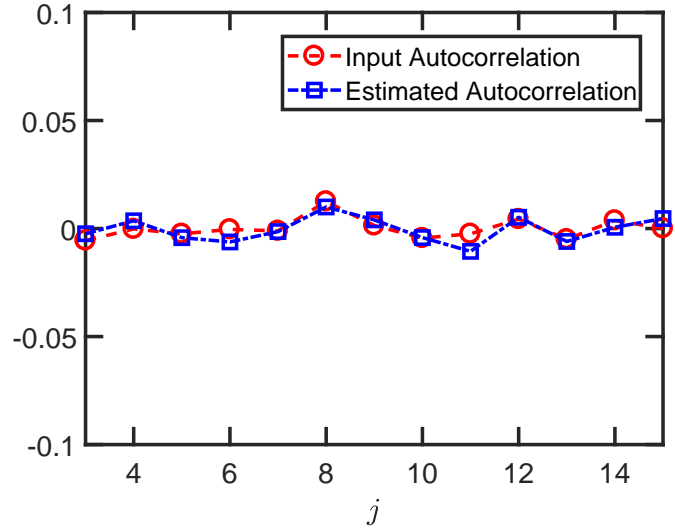}}
	\caption{Recovery of the input signal autocorrelation $r_{ij}$ using the PA approach for $i=2$ with $j$ being a temporal sequence of length $13$, with the true values plotted along the estimates.}
	\label{figure_4}
\end{figure}

We first present an example of autocorrelation sequence recovery. The true input signal autocorrelation and the estimated autocorrelation values by our approach are shown in Fig.~\ref{figure_4}, where $i=2$ and $j$ is a temporal sequence of length $13$. Fig.~\ref{figure_4} appears to confirm the possibility of recovering the autocorrelation values from one-bit sampled data with time-varying thresholds in the non-stationary case.

\subsection{Fitness Analysis of the Proposed Approximations}
\label{sec_fitness}
In this section, we further examine the capability of the PA approach to approximate the integrands in  (\ref{eq:114}). In fact, it appears a precise approximation of the integrands is  connected to the one-bit comparison thresholds that are used for sampling. To see how, note that the exponential term in  (\ref{eq:114}), i.e. $e^{\frac{\alpha_{n}^{2}}{4\beta_{n}}}$, should remain bounded to guarantee a well-behaved \text{Padé} approximation. This is due to the fact that, in (\ref{eq:114}), $\beta_{n}$ is non-zero and $e^{\frac{\alpha_{n}^{2}}{4\beta_{n}}}$ is the only term that can grow very fast and create a round-off numerical error. Consider a bounding of this term in the form
\begin{equation}
\label{eq:131}
e^{\frac{\alpha_{n}^{2}}{4\beta_{n}}}<\gamma_{1}.
\end{equation}
Since $\ln(\cdot)$ is a strictly increasing function, Eq.~(\ref{eq:131}) can be alternatively expressed as:
\begin{equation}
\label{eq:132}
\frac{\alpha_{n}^{2}}{4\beta_{n}}<\ln(\gamma_{1}).
\end{equation}
Note that $\alpha_{n}$ in (\ref{eq:115}) is \emph{directly scaled by the sampling threshold mean $d$}. By setting $\alpha_{n} = d \alpha_{e}$, (\ref{eq:132}) can be written as
\begin{equation}
\label{eq:133}
d^{2}\left(\frac{\alpha_{e}^{2}}{4\beta_{n}}\right) < \ln(\gamma_{1}),
\end{equation}
where $\alpha_{n}$ and $\beta_{n}$ are defined in (\ref{eq:115}).
To guarantee the bound in (\ref{eq:133}), we should have:
\begin{equation}
\label{eq:134}
d^{2} \max_{\theta} \left\{ \frac{\alpha_{e}^{2}}{4\beta_{n}} \mathbb{I}_{\left(0\leq \theta \leq \frac{\pi}{2}\right)}\right\} < \ln(\gamma_{1}).
\end{equation}
The inner optimization problem $\max_{\theta} \frac{\alpha_{e}^{2}}{4\beta_{n}}\mathbb{I}_{\left(0\leq \theta \leq \frac{\pi}{2}\right)}$ can be solved via golden section search and parabolic interpolation method \cite{sun2006optimization}. Note that $\gamma_{1}$ is greater than one. To prove this claim, it is sufficient to show
\begin{equation}
\label{eq:134n}
\exists ~\theta \in \left[0,\frac{\pi}{2}\right]: \quad \beta_{n}(\theta)>0,
\end{equation}
since $\alpha_{e}^{2}$ is always non-negative. By plugging in $\theta=0$ in $\beta_{n}$, we obtain $\beta_{n}\bigg|_{\theta=0}=0.5p_{0j}\left(p_{0i}p_{0j}-p_{ij}^{2}\right)^{-1}$ which is always positive.

To assess the goodness of the considered approximations, consider the integrand term:
\begin{equation}
\label{eq:125}
\begin{aligned}
\Delta(\theta; p_{0i},p_{0j},p_{ij}) &= \sqrt{\frac{\pi}{\beta_{n}}} \frac{\alpha_{n}}{2\beta_{n}} e^{\frac{\alpha_{n}^{2}}{4 \beta_{n}}}\\&-\sqrt{\frac{\pi}{\beta_{n}}} \frac{\alpha_{n}}{\beta_{n}} \bar{Q}\left(\frac{\alpha_{n}}{\sqrt{2 \beta_{n}}}\right) e^{\frac{\alpha_{n}^{2}}{4 \beta_{n}}}.
\end{aligned}
\end{equation}
We compare $\Delta(\cdot)$ with its approximated forms for $\theta \in \left[0,\frac{\pi}{8}\right] \cup \left[\frac{\pi}{8},\frac{3\pi}{8}\right] \cup \left[\frac{3\pi}{8},\frac{\pi}{2}\right]$, whose results are plotted in Fig.~\ref{figure_32} with parameters $p_{0i}=0.8$, $p_{0j}=0.7$, $p_{ij}=0.05$, and $d=0.7$. Note that, in this case, (\ref{eq:134}) is satisfied by considering $\gamma_{1}=2$. As can be seen, PA appears to promise good fitness, with a small mean square error (MSE) of $\sim 1.1699e-04$.

\begin{figure}[t]
	\center{\includegraphics[width=0.6\textwidth]
		{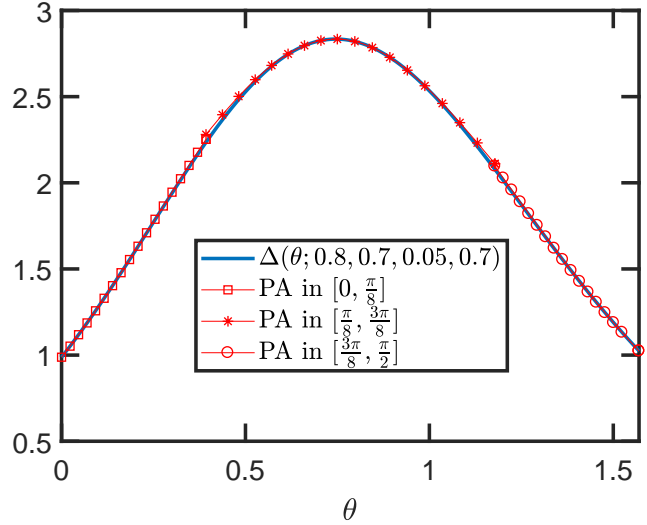}}
	\caption{Comparing $\Delta$ with its approximated piece-wise function using PA.}
	\label{figure_32}
\end{figure}

\section{Gaussian Quadrature Technique for Covariance Recovery}
In this section, the Gauss-Legendre quadrature approach is adopted to evaluate the integration in (\ref{eq:114}). This lays the ground for the recovery of $\left\{p_{ij}\right\}$ since $p_{0i}$ and $p_{0j}$ are obtained by (\ref{eq:fast_1}). At first, an approximated version of (\ref{eq:114}) is obtained based on the Gauss-Legendre quadrature technique in Section \ref{subsec:7}. Then, a criterion will be presented to recover $\left\{p_{ij}\right\}$, and subsequently, the input autocorrelation values in Section \ref{subsec:8}. Finally, the efficacy of this approach in estimating the input autocorrelation values is numerically evaluated.

\subsection{Gauss-Legendre Quadrature Method for Integral Approximation}
\label{subsec:7}
As discussed in Part~\rom{1} of this work, the central assumption to the use of the Gauss-Legendre quadrature technique is that the integrand $f(x)$ should be finite within the domain of integration. The integrands in (\ref{eq:114}) meet this assumption; it is easy to verify that $\textbf{num}(\beta_{n})\neq 0$, where $\textbf{num}(\cdot)$ denotes the numerator of the fractional argument. Therefore, by employing the Gauss-Legendre quadrature technique, the relation in (\ref{eq:114}) can be approximated as
\begin{equation}
\label{eq:94}
\begin{aligned}
R_{\mathbf{y}}(i,j) &\approx \frac{e^{\frac{-d^2(p_{0i}+p_{0j}-2p_{ij})}{2(p_{0i}p_{0j}-p_{ij}^2)}}}{\pi\sqrt{\left(p_{0i}p_{0j}-p_{ij}^{2}\right)}}\left\{\int_{0}^{\frac{\pi}{2}} \frac{1}{\beta_{n}} d \theta \right.\\ &\left. -\frac{\pi}{4}\sum_{\varepsilon=1}^{N_{q}} \omega_{\varepsilon} D_{1}\left(\frac{\pi}{4}(\theta_{\varepsilon}+1); p_{0i}, p_{0j}, p_{ij}, d\right) \right. \\ &\left. +\frac{\pi}{4}\sum_{\varepsilon=1}^{N_{q}} \omega_{\varepsilon} D_{2}\left(\frac{\pi}{4}(\theta_{\varepsilon}+1); p_{0i}, p_{0j}, p_{ij}, d\right)\right\}-1,
\end{aligned}
\end{equation}
where $\theta_{\varepsilon}$ denotes the $\varepsilon$-th Gauss node. Note that the first part of the above integration was readily given in closed-form in~(\ref{eq:38}).

\begin{figure}[t]
	\center{\includegraphics[width=0.6\textwidth]{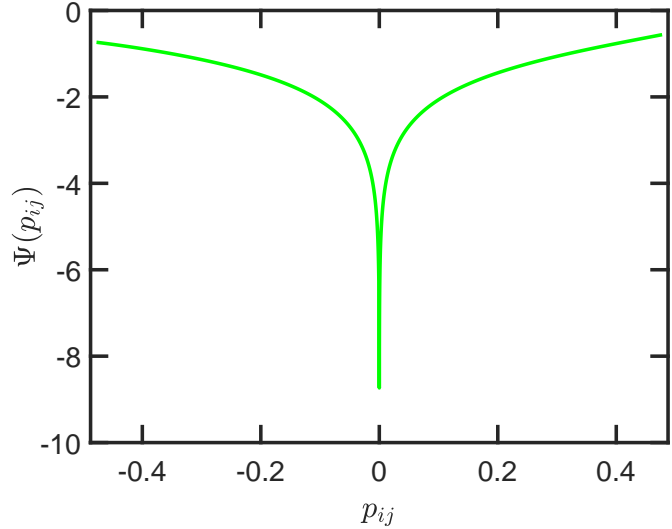}}
	\caption{Example plot of the Gauss-Legendre quadrature approach-based estimation criterion $\Psi(p_{ij})$ with respect to $p_{ij}$ showing its convexity.}
	\label{figure_13}
\end{figure}

\subsection{Covariance Recovery via Convex Optimization}
\label{subsec:8}
Based on our discussion in Section~\ref{tvv}, the values for $p_{0i}^{\star}$ and $p_{0j}^{\star}$ are simply given by (\ref{eq:fast_1}). The parameters of interest $\left\{p_{ij}\right\}$ are then estimated by formulating a minimization problem; namely, we consider the following criterion:
\begin{equation}
\label{eq:95}
\begin{aligned}
&\bar{\Psi}(p_{ij}) \triangleq \log\left(\left|R_{\mathbf{y}}(i,j)-\chi\left\{\int_{0}^{\frac{\pi}{2}}\frac{1}{\beta_{n}} d\theta \right.\right.\right.\\ &\left.\left.\left.-\frac{\pi}{4}\sum_{\varepsilon=1}^{N_{q}} \omega_{\varepsilon} D_{1}\left(\frac{\pi}{4}(\theta_{\varepsilon}+1); p_{0i}^{\star}, p_{0j}^{\star}, p_{ij}, d\right)\right.\right.\right.\\ &\left.\left.\left.+\frac{\pi}{4}\sum_{\varepsilon=1}^{N_{q}} \omega_{\varepsilon} D_{2}\left(\frac{\pi}{4}(\theta_{\varepsilon}+1); p_{0i}^{\star}, p_{0j}^{\star}, p_{ij}, d\right)\right\}+1\right|^{2}\right),
\end{aligned}
\end{equation}
for which the autocorrelation of output signal $R_{\mathbf{y}}$ can be estimated using the SCM in (\ref{eq:43}), and $\chi$ is the same as that in (\ref{eq:79}). Recall that we have obtained an approximated version of (\ref{eq:114}) using the Gauss-Legendre quadrature in (\ref{eq:94}). Let $J_{n}(p_{ij})$ denote this approximation. As a result, we can alternatively use the criterion:
\begin{equation}
\label{eq:96}
\Psi(p_{ij}) \triangleq \log\left(\left|R_{\mathbf{y}}(i,j)-J_{n}(p_{ij})\right|^{2}\right).
\end{equation}
It is interesting to note that the criterion in (\ref{eq:96}) is a convex function with respect to $p_{ij}$ (a proof is provided in Appendix~A)---see Fig.~\ref{figure_13} for an example of the optimization landscape of $\Psi(p_{ij})$. By considering the feasible region of $p_{ij}$, the following recovery problem is obtained:
\begin{equation}
\label{eq:97}
\begin{aligned}
\mathcal{P}_{i,j}&: &\min_{p_{ij}}& &\Psi(p_{ij}),& &\mbox{s.t.}& &-p_{m}\leq p_{ij} \leq p_{m},
\end{aligned}
\end{equation}
where $p_{m}$ is defined in Section~\ref{subsec:4}. The convex problem in (\ref{eq:97}) may be solved efficiently using the golden section search and parabolic interpolation approach. Once $\left\{p_{ij}\right\}$ is obtained, one can estimate the autocorrelation values of $\mathbf{x}$ via (\ref{eq:36}). The recovery results will be presented in the following.


\subsection{Numerical Results}
\label{subsec:30}
We examine the usefulness of the Gauss-Legendre quadrature technique by comparing its recovery results with the true input signal autocorrelation values in the non-stationary case. In all experiments, the input signals were generated as zero-mean Gaussian sequences with time-varying variance ranging from $0.2$ to $0.8$. Accordingly, we made use of the time-varying thresholds with $d=0.3$ and diagonal $\bSigma$ whose diagonal entries are equal to $0.1$.

We present an example of autocorrelation sequence recovery. The true input signal autocorrelation and our estimated autocorrelation values are shown in Fig.~\ref{figure_40} with $i=2$ and $j$ being a temporal sequence of length $13$. Fig.~\ref{figure_40} appears not only to confirm the possibility of recovering the autocorrelation values from one-bit sampled data with time-varying thresholds in the non-stationary case but also the effectiveness of the Gauss-Legendre technique.
\begin{figure}[t]
	\center{\includegraphics[width=0.6\textwidth]{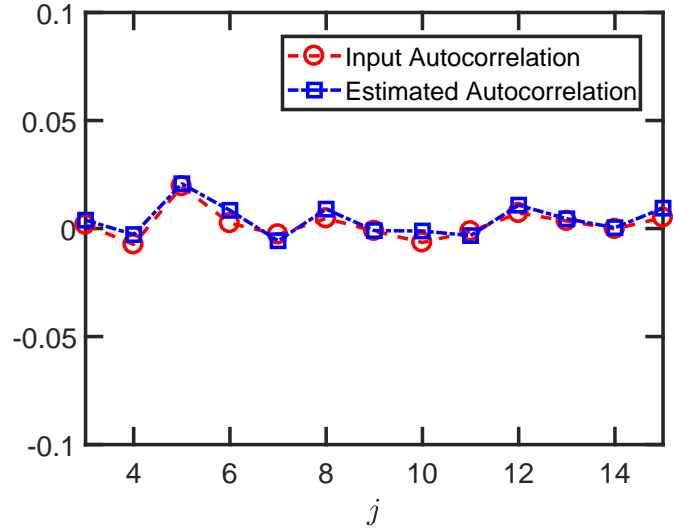}}
	\caption{Recovery of the input autocorrelation $r_{ij}$ using the Gauss-Legendre technique, with $i=2$ and $j$ being a temporal sequence of length $13$. The true values are plotted alongside the estimates.}
	\label{figure_40}
\end{figure}
\section{Monte-Carlo Integration for Covariance Recovery}
In this section, the Monte-Carlo integration approach is utilized to evaluate the integral in (\ref{eq:114}). At first, we formulate an approximated version of (\ref{eq:114}) based on the Monte-Carlo integration approach in Section \ref{subsec:11}. We then present a new criterion to recover $\{p_{ij}\}$ based on this approximation. The efficacy of this approach in estimating the input autocorrelation values is numerically evaluated.

\subsection{Monte-Carlo Method for Integral Evaluation}
\label{subsec:11}
As mentioned in Part~\rom{1} of this work, the Monte-Carlo integration technique can be utilized to recover the input signal covariance matrix. Here, the same idea is adopted to evaluate the integration in (\ref{eq:114}) for the non-stationary input signal scenario. More concretely, by employing the Monte-Carlo integration technique, one can approximate (\ref{eq:114}) as follows:
\begin{equation}
\label{eq:103}
\begin{aligned}
R_{\mathbf{y}}(i,j) &\approx \frac{e^{\frac{-d^2(p_{0i}+p_{0j}-2p_{ij})}{2(p_{0i}p_{0j}-p_{ij}^2)}}}{\pi\sqrt{\left(p_{0i}p_{0j}-p_{ij}^{2}\right)}}\left\{\int_{0}^{\frac{\pi}{2}} \frac{1}{\beta_{n}} d \theta \right.\\& \left. -\frac{\pi}{2N_{m}}\sum_{\varepsilon=1}^{N_{m}} D_{1}\left(\theta_{\varepsilon};p_{0i},p_{0j},p_{ij},d\right) \right.\\& \left. +\frac{\pi}{2N_{m}}\sum_{\varepsilon=1}^{N_{m}} D_{2}\left(\theta_{\varepsilon};p_{0i},p_{0j},p_{ij},d\right)\right\}-1,
\end{aligned}
\end{equation}
where $\theta_{\varepsilon}$ denotes the $\varepsilon$-th random number generated from the uniform distribution in the interval $\left[0,\frac{\pi}{2}\right]$. Note that the first part of the above integral was readily evaluated in closed-form in (\ref{eq:38}).

\subsection{Convex Covariance Recovery}
\label{subsec:12}
Similar the previous proposed approaches, we begin by estimating  $p_{0i}$ and $p_{0j}$ through (\ref{eq:fast_1}). We then aim at estimating the unknown parameters $\{p_{ij}\}$ by formulating a minimization problem. Namely, we consider the following criterion:
\begin{equation}
\label{eq:104}
\begin{aligned}
\bar{\Gamma}(p_{ij}) &\triangleq \log \left(\left|R_{\mathbf{y}}(i,j)-\chi\left\{\int_{0}^{\frac{\pi}{2}} \frac{1}{\beta_{n}} d \theta \right.\right.\right.\\& \left.\left.\left. -\frac{\pi}{2N_{m}}\sum_{\varepsilon=1}^{N_{m}} D_{1}\left(\theta_{\varepsilon};p_{0i}^{\star},p_{0j}^{\star},p_{ij},d\right)\right.\right.\right.\\& \left.\left.\left.+\frac{\pi}{2N_{m}}\sum_{\varepsilon=1}^{N_{m}} D_{2}\left(\theta_{\varepsilon};p_{0i}^{\star},p_{0j}^{\star},p_{ij},d\right)\right\}+1\right|^{2}\right),
\end{aligned}
\end{equation}
where the autocorrelation of output signal $R_{\mathbf{y}}$ can be estimated via (\ref{eq:43}). Let $F_{n}(p_{ij})$ denote the approximation of (\ref{eq:114}) using the Monte-Carlo integration. Therefore, we can consider the following alternative criterion:
\begin{equation}
\label{eq:105}
\Gamma(p_{ij}) \triangleq \log\left(\left|R_{\mathbf{y}}(i,j)-F_{n}(p_{ij})\right|^{2}\right).
\end{equation}
Similar to the previous criterion in (\ref{eq:96}), $\Gamma(p_{ij})$ appears to be a convex function with respect to $p_{ij}$, whose proof of convexity is similar to that for $\Psi_{m}(.)$ in Appendix~A---see Fig.~\ref{figure_15} for an example of the optimization landscape associated with $\Gamma(p_{ij})$. By considering the feasible region of the parameter of interest $\{p_{ij}\}$, one can formulate the following recovery problem:
\begin{equation}
\label{eq:106}
\begin{aligned}
\mathcal{P}_{i,j}&: &\min_{p_{ij}}& &\Gamma(p_{ij}),& &\mbox{s.t.}& &-p_{m}\leq p_{ij} \leq p_{m},
\end{aligned}
\end{equation}
where $p_{m}$ is defined in Section~\ref{subsec:4}. The convex problem in (\ref{eq:106}) may be tackled by the same tools as proposed in Section~\ref{subsec:8}. Recovery of $\{p_{ij}\}$ leads to estimating the autocorrelation values of $\mathbf{x}$ via (\ref{eq:36}). The optimum recovery results will be presented in the following.

\begin{figure}[t]
	\center{\includegraphics[width=0.6\textwidth]{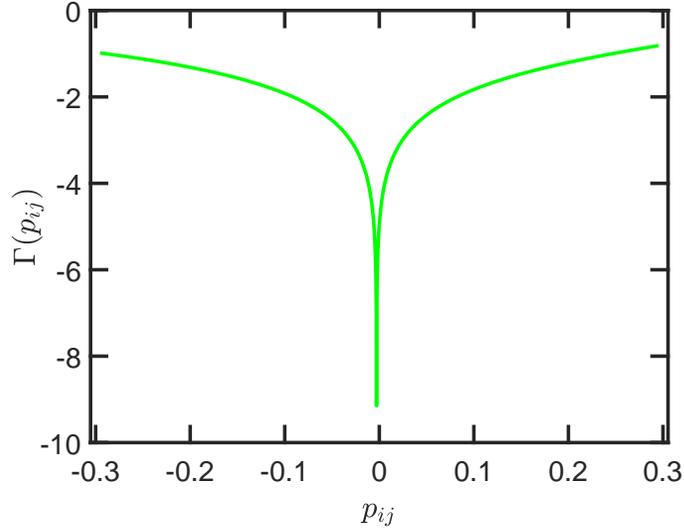}}
	\caption{Example plot of the estimation criterion $\Gamma(p_{ij})$ with respect to $p_{ij}$ showing its convexity.}
	\label{figure_15}
\end{figure}

\subsection{Numerical Results}
Herein, we examine the Monte-Carlo integration technique by comparing its recovery results with the true input signal autocorrelation values in the non-stationary case. In all experiments, the input signals were generated as zero-mean Gaussian sequences with time-varying variances ranging from $0.2$ to $0.8$. Accordingly, we made use of the time-varying thresholds with $d=0.3$ and diagonal $\bSigma$ whose diagonal entries are set to $0.1$.

We present an example of autocorrelation sequence recovery. The true input signal autocorrelation and the estimated autocorrelation values are shown in Fig.~\ref{figure_400}, with $i=2$ and $j$ being a temporal sequence of length $13$. It can be observed from Fig.~\ref{figure_400} that the Monto-Carlo based approach presents satisfactory recovery results in the non-stationary case as well.

\begin{figure}[t]
	\center{\includegraphics[width=0.6\textwidth]{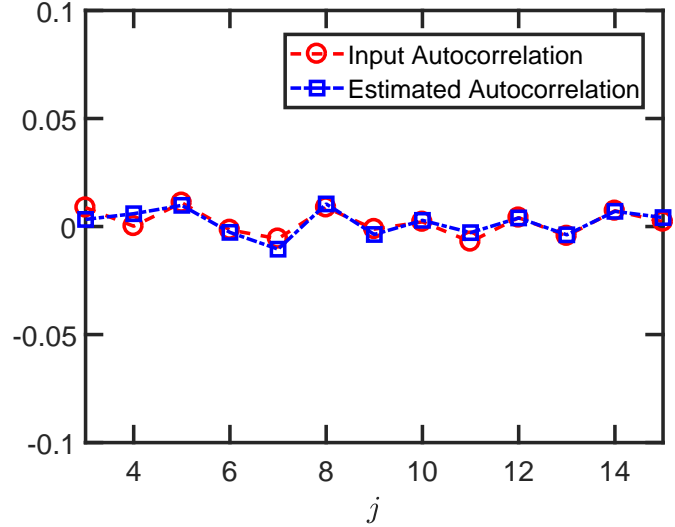}}
	\caption{Recovery of the input autocorrelation $r_{ij}$ using the Monte-Carlo integration approach, with $i=2$ and $j$ being a temporal sequence of length $13$. The true values are plotted alongside the estimates.}
	\label{figure_400}
\end{figure}

\section{Comparing The Proposed Recovery Methods}
It would be of interest to compare the discussed covariance recovery approaches in the non-stationary setting: (i) employing the \text{Padé} approximation of the integrands in (\ref{eq:114}), (ii) applying the Gauss-Legendre quadrature technique, and (iii) applying the Monte-Carlo integration to evaluate the integral in (\ref{eq:114}). To this end, we generate a non-stationary input signal $\mathbf{x}\in \mathbb{R}^{5}$ with the ensemble length $N_{\mathbf{x}}=10000$ and the following non-Toeplitz covariance matrix:
\begin{equation}
\label{eq:15oooo}
\begin{aligned}
\bR^{\star}_{\mathbf{x}}=
\left(\begin{array}{ccccc}
+0.5040 & -0.0065 & +0.0015 & -0.0036 & +0.0044 \\
-0.0065 & +0.2565 & -0.0034 & +0.0086 & +0.0031 \\
+0.0015 & -0.0034 & +0.3298 & +0.0063 & +0.0031 \\
-0.0036 & +0.0086 & +0.0063 & +0.6376 & -0.0062 \\
+0.0044 & +0.0031 & +0.0031 & -0.0062 & +0.4552
\end{array}\right).
\end{aligned}
\end{equation}
The time-varying threshold is generated using the same settings as in Section~\ref{subsec:22}. Table~\ref{table_1} illustrates the squared Frobenius norm of the error, normalized by the squared Frobenius norm of the desired covariance matrix $\bR^{\star}_{\mathbf{x}}$:
\begin{equation}
\label{eq:4000}
\mathrm{NMSE}\triangleq\frac{\left\|\bR^{\star}_{\mathbf{x}}-\hat{\bR}_{\mathbf{x}}\right\|^{2}_{\mathrm{F}}}{\left\|\bR^{\star}_{\mathbf{x}}\right\|^{2}_{\mathrm{F}}},
\end{equation}
where $\hat{\bR}_{\mathbf{x}}$ is the recovered covariance matrix. The presented results are averaged over $5$ experiments.

In the non-stationary input signal scenario, similar to the stationary case, all three approaches show promising recovery results---see Table~\ref{table_1}. The Gauss-Legendre method has a better performance in recovering the input signal autocorrelation values in comparison with the PA technique and the Monte-Carlo integration. It is also worth noting that the two proposed numerical approaches other than the PA technique boil down to simplified convex programs, hence ensuring convergence to the global optimum. However, a proper selection of the number of nodes and quadrature points in the Gauss-Legendre quadrature and the Monte-Carlo integration techniques is crucial and may present itself as a bottleneck in an effective recovery. This is not an obstacle in applying the PA technique. As a result, one may wish to run the PA-based recovery to help with the proper deployment of the other two techniques.

\begin{table} [t]
\caption{Average NMSE value for the Covariance Recovery.}
\centering
\begin{tabular}{ | c | c | }
\hline
\text { Covariance recovery approach } & \text { NMSE } \\ [0.5 ex]
\hline \hline
\text{PA technique} & $7.813e-05$ \\[1 ex]
\hline
\text{Gauss-Legendre quadrature} & $2.093e-05$ \\[1 ex]
\hline
\text{Monte-Carlo integration} & $2.488e-05$ \\[1 ex]
\hline
\end{tabular}
\label{table_1}
\end{table}
\section{Judicious Selection of Sampling Thresholds}
While the use of time-varying thresholds for one-bit sampling has shown promise in various signal recovery problems, tuning the applied thresholds provides both an opportunity and a challenge. In this section, we will discuss an approach to effectively set the sampling threshold mean $d$---whose significance was already shown in our analysis in Section~\ref{sec_fitness}. The value of the threshold mean $d$ is one of the parameters in our recovery cost functions, and consequently, impacts the effectiveness of the autocorrelation sequence recovery by various proposed approaches.

\subsection{Problem Formulation for Threshold Mean Optimization}

Consider a set of thresholds distributed as  $\btau\sim\mathcal{N}\left(\mathbf{d}=\mathbf{1}d,\bSigma=\sigma^{2}_{\btau} \bI\right)$. To design \emph{optimal} thresholds that are independent from the unknown zero-mean Gaussian signal ($\mathbf{x}$), we use the CDF of the observed sign data $\mathbf{y}$ and formulate a maximum likelihood estimation (MLE) problem. The goal is to determine the threshold mean $d$ solely from the sign data $\mathbf{y}$. The one-bit samples are generated as
\begin{equation}
\label{eq:1010}
\begin{aligned}
i\in\left\{1,\cdots,N\right\},\quad
y_{i} &= \begin{cases} +1 &  x_{i}>\tau_{i}, \\ -1 & x_{i}<\tau_{i}.
\end{cases}
\end{aligned}
\end{equation}
The probability vector $\bp$ for the one-bit measurement vector $\mathbf{y}$ may be written as
\begin{equation}
\label{eq:1020}
\begin{aligned}
p_{\mathbf{y}}\left(y_{i}|\tau_{i}\right)=p_{i} &= \begin{cases} 1-\Psi(\tau_{i}) & \text{for}\quad \{y_{i}=+1\}, \\ \Psi(\tau_{i}) &  \text{for}\quad \{y_{i}=-1\},
\end{cases}
\end{aligned}
\end{equation}
where $\Psi(.)$ is the CDF of $\mathbf{x}$, as defined in (\ref{eq:1bbb}). The associated log-likelihood function is hence
given by
\begin{equation}
\label{eq:1030}
\begin{aligned}
\mathcal{L}_{\mathbf{y}}(\br_{0},\btau) &= \sum^{N}_{i=1}\left\{\mathbb{I}_{(y_{i}=+1)}\log\left(1-\Psi(\tau_{i})\right) \right.\\& \left.+\mathbb{I}_{(y_{i}=-1)}\log\left(\Psi(\tau_{i})\right)\right\},
\end{aligned}
\end{equation}
where $\br_{0}$ is a vector containing the diagonal entries of the  covariance matrix of the input signal $\mathbf{x}$. As mentioned earlier, in the non-stationary scenario, the covariance matrix has an arbitrary non-Toeplitz structure. The entries of $\br_{0}$, appearing in the CDF, are the variances for the elements of the input signal. To immediately derive our desired parameter $d$ from (\ref{eq:1030}), we define the following statistical linear model for our threshold $\btau$:
\begin{equation}
\label{eq:1040}
\btau = \mathbf{1}d+\sigma^{2}_{\btau}~\mathbf{z}, \quad \mathbf{z}\sim\mathcal{N}\left(0,\bI\right).
\end{equation}
Therefore, the MLE is formulated as 
\begin{equation}
\label{eq:1050}
\begin{aligned}
\min_{d,\sigma^{2}_{\btau}}  \quad &-\mathcal{L}_{\mathbf{y}}(\br_{0},\btau)\\
\text{s.t.}\quad &r_{0i} = \left(\frac{d}{Q^{-1}\left(\frac{\mu_{i}+1}{2}\right)}\right)^{2}-\sigma^{2}_{\btau},\\
&i\in\left\{1,\cdots,N\right\},
\end{aligned}
\end{equation}
where the equality constraint is obtained from (\ref{rem1}).

\subsection{Numerical Illustrations for Threshold Mean Design}
To numerically scrutinize our approach, the input signal is generated using the same settings as described in Section~\ref{subsec:22}. The desired time-varying threshold is generated as a Gaussian process with $d^{\star}=0.3$ and $\Sigma=0.1\bI$. The sign data $\mathbf{y}$ was generated accordingly to be  utilized in order to estimate the desired threshold mean by the MLE problem in (\ref{eq:1050}). The results are presented in Fig.~\ref{figure_1050} based on the NMSE between the desired threshold mean $d^{\star}$ and the recovered mean $\hat{d}$, defined as:
\begin{equation}
\label{eq:260000}
\begin{aligned}
\mathrm{NMSE} \triangleq \frac{|d^{\star}-{\hat d}|^{2}}{|d^{\star}|^{2}}.
\end{aligned}
\end{equation}
Each presented data point is averaged over $5$ experiments. As can be seen in Fig.~\ref{figure_1050}, the proposed method can accurately estimate the mean of a time-varying threshold. The results are obtained with sequence lengths $N_{\mathbf{x}}\in \left\{1000, 3000, 6000, 10000\right\}$. Moreover, $\sigma^{2}_{\btau}$ is estimated using (\ref{eq:1050}) with the average NMSE of $\sim 1e-03$ by considering $N_{\mathbf{x}}=10000$.

\begin{figure}[t]
	\center{\includegraphics[width=0.45\textwidth]{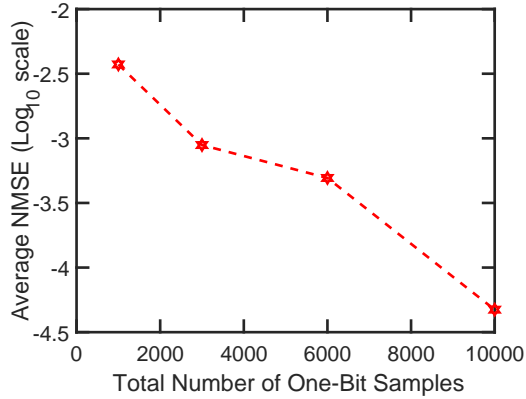}}
	\caption{Average NMSE for the estimated sampling threshold mean $\hat{d}$, based on MLE problem presented in (\ref{eq:1050}).}
	\label{figure_1050}
\end{figure}

\section{Modified Bussgang Law For Time-Varying Sampling Thresholds}
The modified Bussgang law for stationary input signals was derived in Part~\rom{1} of this work. This modified Bussgang law presents a useful relation in stochastic analysis of stationary input signals when they are sampled with time-varying thresholds. In this section, the modified Bussgang law is extended to the case when the non-stationary input signals are considered in such settings.
\begin{figure*}[t]
	\centering
	\begin{subfigure}[b]{0.45\textwidth}
		\includegraphics[width=1\linewidth]{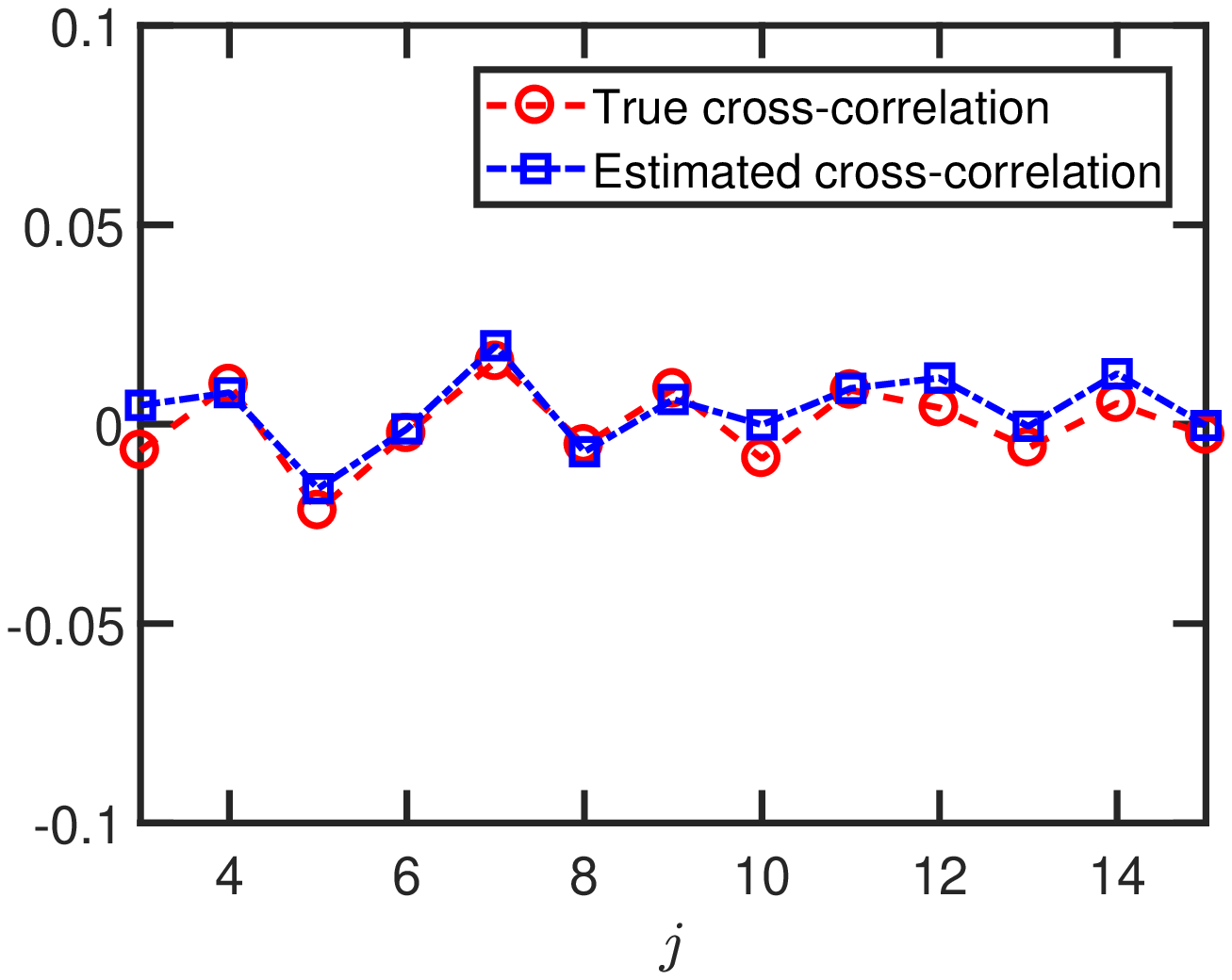}
		\caption{\text{Padé} approximation technique}
	\end{subfigure}
	\begin{subfigure}[b]{0.45\textwidth}
		\includegraphics[width=1\linewidth]{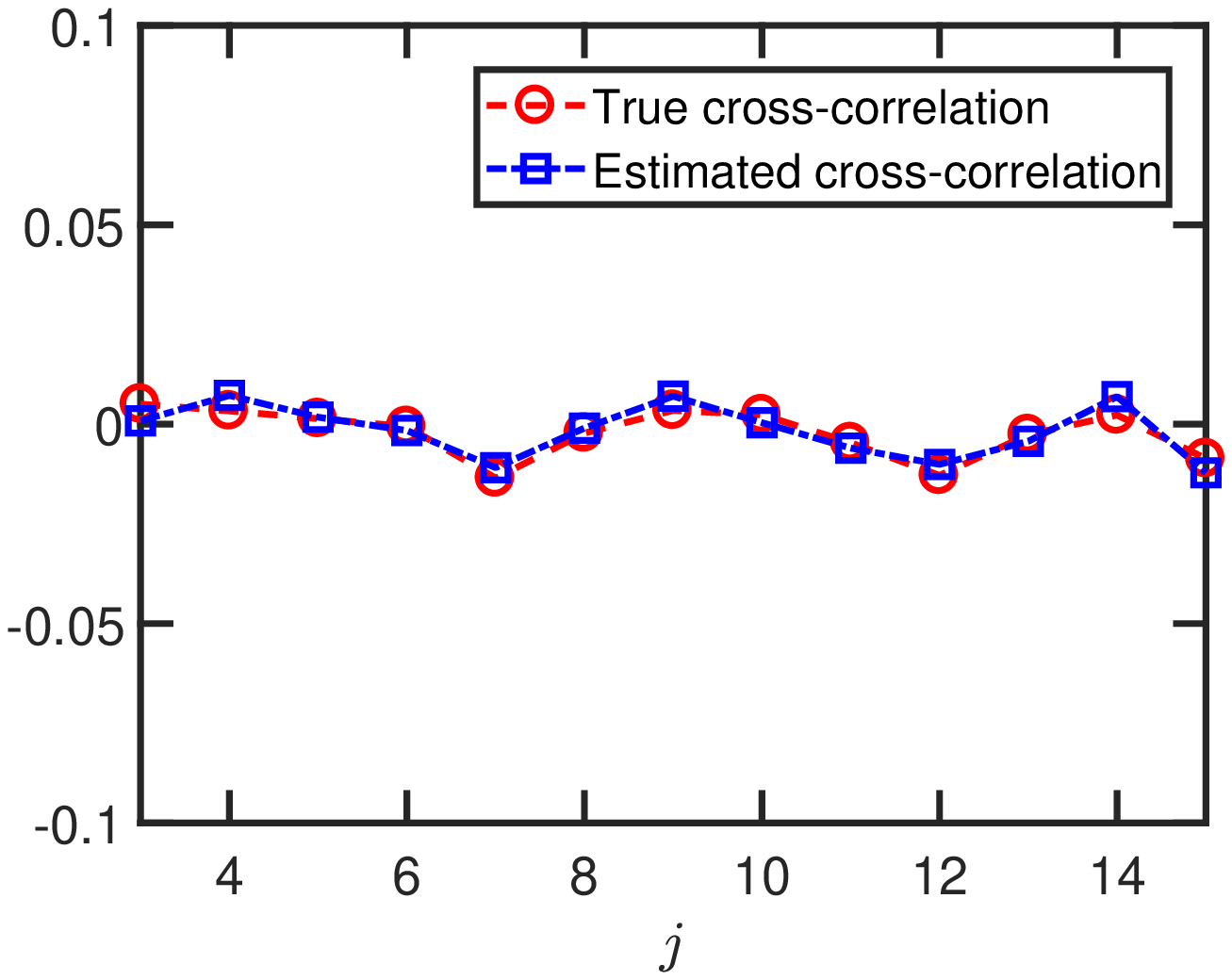}
		\caption{Gauss-Legendre quadrature technique}
	\end{subfigure}
	\begin{subfigure}[b]{0.45\textwidth}
		\includegraphics[width=1\linewidth]{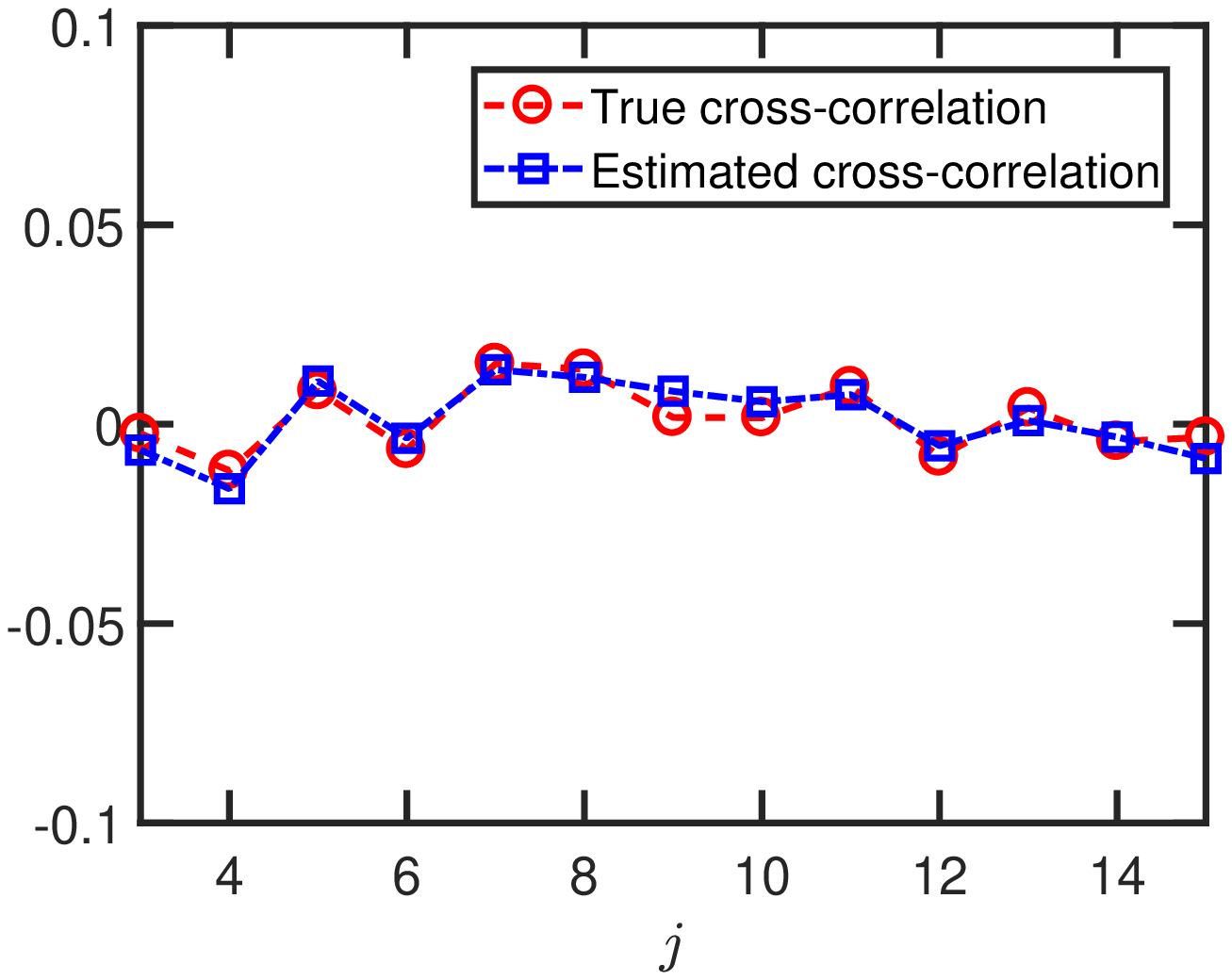}
		\caption{Monte-Carlo integration technique}
	\end{subfigure}
	\caption{The recovery of the cross-correlation between the input signal and the one-bit sampled data by the modified Bussgang law applied in conjunction with various one-bit autocorrelation recovery approaches for a sequence of length $13$, with the true values plotted alongside the estimates.}
	\label{figure_80}
\end{figure*}
\subsection{Modified Bussgang Law for Non-Stationary Input Signals}
By considering time-varying thresholds, the cross-correlation matrix between the one-bit sampled data and the non-stationary input signal can be formulated as follows.
\begin{theorem}
\label{theorem_4}
Suppose $\btau\sim\mathcal{N}\left(\mathbf{d}=\mathbf{1}d,\bSigma\right)$ is a time-varying threshold, and $\mathbf{x}$ is a non-stationary input signal. Let $\mathbf{y}=g\mathbf{(w)}$ denote the one-bit sampled data, where $\mathbf{w}=\mathbf{x}-\btau$ is distributed as $\mathbf{w}\sim \mathcal{N}\left(-\mathbf{d},\bR_{\mathbf{x}}+\bSigma=\bP\right)$, with $p_{0j}=\bP(j,j)$ and $p_{ij}=\bP(i,j)$. Then, the cross-correlation matrix between $\mathbf{y}$ and $\mathbf{x}$ satisfies the relation,
\begin{equation}
\label{eq:143}
\begin{aligned}
R_{\mathbf{yx}}(i,j) = R_{\mathbf{y}\btau}(i,j)+\left[\varepsilon_{1} p_{ij}-\varepsilon_{2} d(p_{0j}-p_{ij})\right],
\end{aligned}
\end{equation}
where $\varepsilon_{1}$ and $\varepsilon_{2}$ are given by
\begin{equation}
\label{eq:144}
\begin{aligned}
\varepsilon_{1} &= \sqrt{\frac{2}{\pi p_{0j}}}\Gamma\left(1,\dfrac{d^2}{2p_{0j}}\right)-\frac{d}{\sqrt{\pi p^{2}_{0j}}}\left(\Gamma\left(\dfrac{1}{2},\dfrac{d^2}{2p_{0j}}\right)-\sqrt{{\pi}}\right),\\ \varepsilon_{2} &= -\frac{1}{p_{0j}} \operatorname{erf}\left(\frac{d}{\sqrt{2p_{0j}}}\right).
\end{aligned}
\end{equation}
In particular, for $i=j$, the relation in (\ref{eq:143}) yields
\begin{equation}
\label{eq:108}
\begin{aligned}
R_{\mathbf{yx}}(i,i) &= R_{\mathbf{y}\btau}(i,i) + \sqrt{\frac{2p_{0i}}{\pi}}~
\Gamma\left(1,\frac{d^{2}}{2p_{0i}}\right)\\ &- \frac{d}{\sqrt{\pi}} \Gamma\left(\frac{1}{2},\frac{d^{2}}{2p_{0i}}\right) + d.
\end{aligned}
\end{equation}
\end{theorem}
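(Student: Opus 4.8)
The plan is to reduce the claim to the single scalar Gaussian moment that underlies the stationary Bussgang law of Part~\rom{1}, by exploiting the identity $\mathbf{x}=\mathbf{w}+\btau$ that is built into the definition of $\mathbf{w}$. Because $\btau$ is independent of $\mathbf{x}$ (hence jointly Gaussian with $\mathbf{w}$), substituting $\mathbf{x}=\mathbf{w}+\btau$ and pairing $x_{i}$ with $y_{j}=g(w_{j})$ (consistent with the $p_{0j}$, rather than $p_{0i}$, appearing in (\ref{eq:144})) gives, entrywise,
\begin{equation}
R_{\mathbf{yx}}(i,j)=\mathbb{E}\!\left\{w_{i}\,g(w_{j})\right\}+\mathbb{E}\!\left\{\tau_{i}\,g(w_{j})\right\}=R_{\mathbf{yw}}(i,j)+R_{\mathbf{y}\btau}(i,j).
\end{equation}
The term $R_{\mathbf{y}\btau}(i,j)$ is kept symbolic as the first term of (\ref{eq:143}), so the whole theorem reduces to evaluating $R_{\mathbf{yw}}(i,j)=\mathbb{E}\{w_{i}\,g(w_{j})\}$ for the jointly Gaussian pair $(w_{i},w_{j})$ with means $-d$, variances $p_{0i},p_{0j}$, and covariance $p_{ij}$, where $g(\cdot)$ is the sign function.

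To evaluate this, I would condition on $w_{j}$ through the linear Gaussian regression $\mathbb{E}\{w_{i}\mid w_{j}\}=-d+\frac{p_{ij}}{p_{0j}}(w_{j}+d)$ and apply the tower rule, obtaining
\begin{equation}
R_{\mathbf{yw}}(i,j)=-d\,\mathbb{E}\{g(w_{j})\}+\frac{p_{ij}}{p_{0j}}\,\mathbb{E}\{(w_{j}+d)\,g(w_{j})\};
\end{equation}
equivalently, Price's/Stein's lemma gives $\operatorname{Cov}\!\left(g(w_{j}),w_{i}\right)=p_{ij}\,\mathbb{E}\{g'(w_{j})\}$ and the same conclusion. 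The first scalar moment is precisely Lemma~\ref{lemma_1}: $\mathbb{E}\{g(w_{j})\}=\mathbb{E}\{y_{j}\}=2Q(d/\sqrt{p_{0j}})-1$. For the second, I set $u=w_{j}+d\sim\mathcal{N}(0,p_{0j})$, so that $\mathbb{E}\{(w_{j}+d)g(w_{j})\}=\mathbb{E}\{u\,g(u-d)\}$; splitting the Gaussian integral at $u=d$ and using $\int_{d}^{\infty}u\,e^{-u^{2}/(2p_{0j})}\,du=p_{0j}e^{-d^{2}/(2p_{0j})}$ yields $\sqrt{2p_{0j}/\pi}\;e^{-d^{2}/(2p_{0j})}$. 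Combining,
\begin{equation}
R_{\mathbf{yw}}(i,j)=p_{ij}\sqrt{\tfrac{2}{\pi p_{0j}}}\;e^{-\frac{d^{2}}{2p_{0j}}}-d\Big(2Q\big(\tfrac{d}{\sqrt{p_{0j}}}\big)-1\Big).
\end{equation}

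The remaining work is special-function bookkeeping. Using $e^{-d^{2}/(2p_{0j})}=\Gamma(1,\tfrac{d^{2}}{2p_{0j}})$ and the substitution $z=t^{2}/2$ in the integral defining $\Gamma(\tfrac12,\cdot)$, which gives $\Gamma(\tfrac12,x^{2}/2)=2\sqrt{\pi}\,Q(x)$ — equivalently $2Q(d/\sqrt{p_{0j}})-1=\tfrac{1}{\sqrt{\pi}}\Gamma(\tfrac12,\tfrac{d^{2}}{2p_{0j}})-1=-\operatorname{erf}(d/\sqrt{2p_{0j}})$ — one rewrites $R_{\mathbf{yw}}(i,j)=p_{ij}\sqrt{2/(\pi p_{0j})}\,\Gamma(1,\tfrac{d^{2}}{2p_{0j}})+d\operatorname{erf}(d/\sqrt{2p_{0j}})$. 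Expanding the stated combination $\varepsilon_{1}p_{ij}-\varepsilon_{2}d(p_{0j}-p_{ij})$ from (\ref{eq:144}): the bracket $\tfrac{d}{\sqrt{\pi}\,p_{0j}}\big(\Gamma(\tfrac12,\cdot)-\sqrt{\pi}\big)$ inside $\varepsilon_{1}$ collapses to $-\tfrac{d}{p_{0j}}\operatorname{erf}(d/\sqrt{2p_{0j}})$, the term $\varepsilon_{2}d\,p_{ij}$ cancels the $\tfrac{d p_{ij}}{p_{0j}}\operatorname{erf}(\cdot)$ generated by $\varepsilon_{1}p_{ij}$, and $-\varepsilon_{2}d\,p_{0j}=d\operatorname{erf}(d/\sqrt{2p_{0j}})$, so what survives is exactly $R_{\mathbf{yw}}(i,j)$ — this establishes (\ref{eq:143}). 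The diagonal identity (\ref{eq:108}) then follows at once by setting $p_{ij}=p_{0j}=p_{0i}$, yielding $R_{\mathbf{yw}}(i,i)=\sqrt{2p_{0i}/\pi}\,\Gamma(1,\tfrac{d^{2}}{2p_{0i}})+d\operatorname{erf}(d/\sqrt{2p_{0i}})$, and rewriting $d\operatorname{erf}(d/\sqrt{2p_{0i}})=d\big(1-2Q(d/\sqrt{p_{0i}})\big)=d-\tfrac{d}{\sqrt{\pi}}\Gamma(\tfrac12,\tfrac{d^{2}}{2p_{0i}})$.

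The main obstacle is not conceptual — the skeleton is the same Bussgang decomposition as in Part~\rom{1}, now tracked entry by entry — but organizational: one must keep straight which index carries the one-bit nonlinearity (the appearance of $p_{0j}$ rather than $p_{0i}$ in $\varepsilon_{1},\varepsilon_{2}$ pins this down), and manage the translation among $Q$, $\operatorname{erf}$, $\Gamma(1,\cdot)$ and $\Gamma(\tfrac12,\cdot)$, including verifying the cancellation that compresses the result into the compact bracket $\varepsilon_{1}p_{ij}-\varepsilon_{2}d(p_{0j}-p_{ij})$. Recording the identity $\Gamma(\tfrac12,x^{2}/2)=2\sqrt{\pi}\,Q(x)$ at the outset removes essentially all of the friction.
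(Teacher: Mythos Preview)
Your proposal is correct and follows essentially the same route as the paper: both reduce (\ref{eq:143}) to evaluating $R_{\mathbf{yw}}(i,j)$ and then eliminate $w_{i}$ first, the paper by a brute-force inner Gaussian integral (completing the square in Appendix~B) and you by the equivalent Gaussian regression $\mathbb{E}\{w_{i}\mid w_{j}\}$, after which both land on the same two scalar moments in $w_{j}$. Your packaging is a bit tidier---invoking Lemma~\ref{lemma_1} for $\mathbb{E}\{g(w_{j})\}$ and obtaining (\ref{eq:108}) as the specialization $p_{ij}=p_{0i}$ rather than via the paper's separate Appendix~C computation---but the mathematical content is the same.
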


\begin{proof}
Suppose $w_{i}$ and $w_{j}$ are the $i$-th and the $j$-th entries of $\mathbf{w}$ ($i\neq j$) with $\mathbb{E}\{w_{i}\}=\mathbb{E}\{w_{j}\}=-d$, and that $p_{0i}=\bP(i,i)$, $p_{0j}=\bP(j,j)$, and $p_{ij}=\bP(i,j)$, where $\bP$ denotes the covariance matrix of $\mathbf{w}$. Consider the quantized random variables $y_{i}=g(w_{i})$ and $y_{j}=g(w_{j})$, where $g(.)$ denotes a non-linear transformation function. Therefore, the cross-correlation function between $w_{i}$ and $y_{j}$ can be obtained as
\begin{equation}
\label{eq:206}
\begin{aligned}
R_{\mathbf{yw}}(i,j) &= \kappa \int^{\infty}_{-\infty}\int^{\infty}_{-\infty} w_{i}g(w_{j})e^{\lambda(d)}dw_{i}dw_{j},
\end{aligned}
\end{equation}
where $\kappa$ and $\lambda(d)$ are defined in (\ref{eq:86}) and (\ref{eq:28}), respectively. We first calculate the integral in (\ref{eq:206}) with respect to $w_{i}$, i.e.,
\begin{equation}
\label{eq:88}
\begin{aligned}
R_{\mathbf{yw}}(i,j) &= \frac{e^{\frac{-d^{2}(p_{0i}+p_{0j}-2p_{ij})}{2(p_{0i}p_{0j}-p_{ij}^{2})}}}{2\pi\sqrt{p_{0i}p_{0j}-p_{ij}^2}} \int^{\infty}_{-\infty}  g(w_{j})e^{\frac{2d(p_{0i}-p_{ij})w_{j}+w^{2}_{j}p_{0i}}{-2(p_{0i}p_{0j}-p_{ij}^{2})}}\\
&\int^{\infty}_{-\infty} w_{i}e^{\frac{2d(p_{0j}-p_{ij})w_{i}+w^{2}_{i}p_{0j}-2p_{ij}w_{i}w_{j}}{-2(p_{0i}p_{0j}-p_{ij}^{2})}}dw_{i}dw_{j}\\
&= \left[ \varepsilon_{1}p_{ij}-\varepsilon_{2}d(p_{0j}-p_{ij})\right],
\end{aligned}
\end{equation}
where $\varepsilon_{1}$ and $\varepsilon_{2}$ are given as
\begin{equation}
 \label{eq:90}
 \begin{aligned}
 \varepsilon_{1}(j) &= \frac{1}{\sqrt{2\pi p^{3}_{0j}}}\int^{\infty}_{-\infty} w_{j}g(w_{j})e^{\frac{-(w_{j}+d)^{2}}{2p_{0j}}}d w_{j},\\
 \varepsilon_{2}(j) &= \frac{1}{\sqrt{2\pi p^{3}_{0j}}}\int^{\infty}_{-\infty}g(w_{j})e^{\frac{-(w_{j}+d)^{2}}{2p_{0j}}}d w_{j}.
 \end{aligned}
\end{equation}
A detailed proof of the results in (\ref{eq:88}) and (\ref{eq:90}) is presented in Appendix~B. Based on (\ref{eq:90}), it can be seen that the values of $\varepsilon_{1}$ and $\varepsilon_{2}$ are dependent on the entry index number $j$. As a result, the modified Bussgang law for the non-stationary input signal can be presented as:
\begin{equation}
\label{eq:92}
\begin{aligned}
R_{\mathbf{yx}}(i,j)-R_{\mathbf{y}\btau}(i,j)=& \left[(\varepsilon_{1}(j)+d \varepsilon_{2}(j))\right.\\& \left.\left(R_{\mathbf{x}}(i,j)+\bSigma(i,j)\right)-d \varepsilon_{2}(j)p_{0j}\right].
\end{aligned}
\end{equation}
If the non-linear function $g(.)$ is the sign function, which is the case in one-bit quantization, $\varepsilon_{1}$ and $\varepsilon_{2}$ can be obtained using similar steps as presented in Part~\rom{1} of our work. However, in this case, $p_{0j}$ should be utilized in lieu of $p_{0}$. For $i=j$,  Eq.~(\ref{eq:92}) boils down to (\ref{eq:108}), a proof of which is presented in Appendix~C. Based on (\ref{eq:92}), the cross-correlation matrix between the input and the output one-bit data is computed, where $\{p_{0j}\}$ is obtained by (\ref{eq:fast_1}) and $\{p_{ij}\}$ can be either recovered using (\ref{eq:45}), (\ref{eq:97}) or (\ref{eq:106}). Note that the cross-correlation matrix between the threshold vector $\btau$ and the output vector $\mathbf{y}$ can be estimated via a \emph{sample cross-correlation matrix}:
\begin{equation}
\label{eq:85}
\bR_{\mathbf{y}\btau}\approx \frac{1}{N_{\mathbf{x}}} \sum_{k=1}^{N_{\mathbf{x}}} \mathbf{y}(k) \btau(k)^{\mathrm{H}}.
\end{equation}
\end{proof}

\subsection{A Numerical Investigation of the Modified Bussgang Law}
We now examine the modified Bussgang law for the non-stationary input signals by comparing its recovery results with the true cross-correlation values between the input signal and one-bit quantized data. In all experiments, the input signal settings are the same as Section~\ref{subsec:22}. The time-varying threshold settings are as follows: (a) PA: $d=0.5$ and $\bSigma=0.2\bI$, (b) Gauss-Legendre: $d=0.3$ and $\bSigma=0.1\bI$, (c) Monte-Carlo: $d=0.3$
and $\bSigma=0.1\bI$ , where $\bI$ denotes the identity matrix. 

The true cross-correlation between the input signal and the one-bit sampled data and the estimated cross-correlation values obtained using our approach are shown in Fig.~\ref{figure_80}, for $i=2$ and $j$ as a random sequence of length $13$. Our results appear to confirm the possibility of recovering the cross-correlation values from one-bit sampled data with time-varying thresholds by employing any of the three recovery methods (PA, Gauss-Legendre method and Monte-Carlo integration).

\section{Conclusion}
We studied a generalization of the modified arcsine law discussed in Part~\rom{1} of our work through \text{Padé} approximations, Gauss-Legendre quadrature approach, and Monte-Carlo integration, to cases where the input signal is assumed to be non-stationary. The numerical results present the efficacy of all three approaches in the covariance matrix recovery. Moreover, a modified Bussgang law was established for the one-bit sampling of non-stationary input signals with time-varying thresholds.

\appendices

\section{Proof of The Convexity of $\Psi(p_{ij})$ in (\ref{eq:96})}
Since $\log(\cdot)$ is a strictly increasing function, we can analyze the criterion $\Psi_{m}(p_{ij})=\left(R_{\mathbf{y}}(i,j)-J_{n}(p_{ij})\right)^{2}$ to show the convexity of $\Psi(p_{ij})$. The derivative of $\Psi_{m}(p_{ij})$ with respect to $p_{ij}$ is computed as
\begin{equation}
\label{eq:app_1}
\Psi_{m}^{\prime}(p_{ij})=-2\left(R_{\mathbf{y}}(i,j)-J_{n}(p_{ij})\right)J_{n}^{\prime}(p_{ij}),
\end{equation}
where $J_{n}$ is the approximated version of (\ref{eq:114}) using the Gauss-Legendre quadrature with the following close-form formula:
\begin{equation}
\label{eq:app_2}
\begin{aligned}
J_{n}(p_{ij})=\chi(p_{ij})\Bigg(\sqrt{p_{0i}p_{0j}-p_{ij}^{2}}&\left(\pi+2\sin^{-1}\left(\frac{p_{ij}}{\sqrt{p_{0i}p_{0j}}}\right)\right)\\&\frac{\pi}{4} I\Bigg)-1=\chi(p_{ij})T(p_{ij})-1,
\end{aligned}
\end{equation}
where $\chi$ is defined in (\ref{eq:79}), and $I$ is given by
\begin{equation}
\label{eq:app_3}
I = \sum_{\varepsilon=1}^{N_{q}}\omega_{\varepsilon}\sqrt{\frac{\pi}{\beta_{n}}}\left(\frac{\alpha_{n}}{\beta_{n}}\right)\left(\frac{1}{2}-Q\left(\frac{\alpha_{n}}{\sqrt{2\beta_{n}}}\right)\right)e^{\frac{\alpha_{n}^{2}}{4\beta_{n}}}.
\end{equation}
Based on (\ref{eq:app_2}) and (\ref{eq:app_3}), $J_{n}^{\prime}(p_{ij})$ can be formulated as
\begin{equation}
\label{eq:app_4}
\begin{aligned}
J_{n}^{\prime}(p_{ij})&=\chi(p_{ij})\left(2-\frac{p_{ij}\left(\pi+2\sin^{-1}\left(\frac{p_{ij}}{\sqrt{p_{0i}p_{0j}}}\right)\right)}{\sqrt{p_{0i}p_{0j}-p_{ij}^{2}}}\right)\\&+\chi(p_{ij})\left(\frac{\partial I}{\partial \alpha_{n}}\frac{\partial \alpha_{n}}{\partial p_{ij}}+\frac{\partial I}{\partial \beta_{n}}\frac{\partial \beta_{n}}{\partial p_{ij}}\right)\\&+\frac{\partial \chi}{\partial p_{ij}}T(p_{ij}),
\end{aligned}
\end{equation}
where $\frac{\partial \alpha_{n}}{\partial p_{ij}}$ and $\frac{\partial \beta_{n}}{\partial p_{ij}}$ are given according to (\ref{eq:115}). As can be seen in (\ref{eq:app_1}), (\ref{eq:app_2}), and (\ref{eq:app_4}), analyzing the convexity of $\Psi_{m}(p_{ij})$ depends on the parameters $d$, $p_{0i}$, $p_{0j}$, $N_{q}$, and $\{\theta_{\varepsilon}\}$, which indicates that the analysis is restricted to the case where the mentioned parameters are known; i.e. the parameters must be specified for the covariance matrix recovery. Generally speaking, based on (\ref{eq:app_1}), (\ref{eq:app_2}) and (\ref{eq:app_4}), $\Psi_{m}(p_{ij})$ is convex when $J_{n}^{\prime}(p_{ij})>0$, or equivalently when $J_{n}(p_{ij})$ is a strictly increasing function in the feasible region of $p_{ij}$; i.e. $-p_{m} \leq p_{ij} \leq p_{m}$ where $p_{m}=\operatorname{min}\{[p_{0i},p_{0j}]\}$. As a result, $\Psi_{m}^{\prime}(p_{ij})=0$ has only one solution which is the value of $p_{ij}$ that satisfies $R_{\mathbf{y}}(i,j)=J_{n}(p_{ij})$. Therefore, the convexity of $\Psi_{m}(p_{ij})$ can be easily concluded in light of (\ref{eq:app_1}). For instance, one may easily verify that the selected parameters for the recovery of the input covariance matrix in Section~\ref{subsec:30} makes $J_{n}(p_{ij})$ a strictly increasing function, and thus, $\Psi_{m}(p_{ij})$ a convex function.

\section{Proof of The Modified Bussgang Law Formula}
Note that
\begin{equation}
\label{eq:299}
\begin{aligned}
R_{\mathbf{yw}}(i,j) &= \frac{e^{\frac{-d^{2}(p_{0i}+p_{0j}-2p_{ij})}{2(p_{0i}p_{0j}-p_{ij}^{2})}}}{2\pi\sqrt{p_{0i}p_{0j}-p_{ij}^2}} \int^{\infty}_{-\infty}  g(w_{j})e^{\frac{2d(p_{0i}-p_{ij})w_{j}+w^{2}_{j}p_{0i}}{-2(p_{0i}p_{0j}-p_{ij}^{2})}}\\
&\int^{\infty}_{-\infty} w_{i}e^{\frac{2d(p_{0j}-p_{ij})w_{i}+w^{2}_{i}p_{0j}-2p_{ij}w_{i}w_{j}}{-2(p_{0i}p_{0j}-p_{ij}^{2})}}dw_{i}dw_{j},
\end{aligned}
\end{equation}
where the inner integral and the outer integral are called $\mathcal{L}_{1}$ and $\mathcal{L}_{2}$, respectively. The inner integral may be evaluated as
\begin{equation}
\label{eq:300}
\begin{aligned}
\mathcal{L}_{1}&= \int^{\infty}_{-\infty}
w_{i}e^{\frac{2d(p_{0j}-p_{ij})w_{i}+w^{2}_{i}p_{0j}-2p_{ij}w_{i}w_{j}}{-2(p_{0i}p_{0j}-p^{2}_{ij})}}dw_{i}\\
&= e^{\frac{\left(p_{0j}d-p_{ij}\left(w_{j}+d\right)\right)^2}{2p_{0j}(p_{0i}p_{0j}-p^{2}_{ij})}}\int^{\infty}_{-\infty}
w_{i}e^{-\frac{\left(w_{i}+\left(d-\frac{p_{ij}}{p_{0j}}\left(w_{j}+d\right)\right)\right)^2}{2(p_{0j}-\frac{p^{2}_{ij}}{p_{0j}})}}dw_{i}\\
&= e^{\frac{\left(p_{0j}d-p_{ij}\left(w_{j}+d\right)\right)^2}{2p_{0j}(p_{0i}p_{0j}-p^{2}_{ij})}} \sqrt{2\pi \left(p_{0j}-\frac{p^{2}_{ij}}{p_{0j}}\right)}\times \cdots\\
&\left(\frac{p_{ij}}{p_{0j}}\left(w_{j}+d\right)-d\right).
\end{aligned}
\end{equation}
Next, the outer integral is evaluated as
\begin{equation}
\label{eq:301}
\begin{aligned}
\mathcal{L}_{2} &= \sqrt{2\pi \left(p_{0j}-\frac{p^{2}_{ij}}{p_{0j}}\right)} e^{\frac{p^{2}_{0j}d^{2}+p^{2}_{ij}d^{2}-2dp_{0j}p_{ij}}{2p_{0j}(p_{0i}p_{0j}-p^{2}_{ij})}} \times \cdots \\ &\int^{\infty}_{-\infty} g(w_{j}) \left(\frac{p_{ij}}{p_{0j}}\left(w_{j}+d\right)-d\right) e^{\frac{\alpha_{ij}\left(w^{2}_{j}+2dw_{j}\right)}{\beta_{ij}}} dw_{j},
\end{aligned}
\end{equation}
where $\alpha_{ij}=p_{0i}p_{0j}-p^{2}_{ij}$ and $\beta_{ij}=-2p_{0j}\alpha_{ij}$. The above integral can be simplified as below:
\begin{equation}
\label{eq:302}
\begin{aligned}
\mathcal{L}_{2} &= \sqrt{2\pi \left((p_{0j}-\frac{p^{2}_{ij}}{p_{0j}})\right)} e^{\frac{d^{2}(p_{0i}+p_{0j}-2p_{ij})}{2(p_{0i}p_{0j}-p_{ij}^{2})}} \times \cdots\\ &\int^{\infty}_{-\infty}g(w_{j})\left(\frac{p_{ij}}{p_{0j}}\left(w_{j}+d\right)-d\right)e^{-\frac{\left(w_{j}+d\right)^2}{2p_{0j}}} dw_{j}.
\end{aligned}
\end{equation}
When the values from (\ref{eq:300}) and (\ref{eq:302}) are inserted in Eq. (\ref{eq:299}), we have the final form of the modified Bussgang law, i.e.
\begin{equation}
\label{eq:303}
\begin{aligned}
R_{\mathbf{yw}}(i,j)=\varepsilon_{1}p_{ij}-\varepsilon_{2} d(p_{0j}-p_{ij}),
\end{aligned}
\end{equation}
where $\varepsilon_{1}$ and $\varepsilon_{2}$ are obtained as below:
\begin{equation}
\label{eq:304}
\begin{aligned}
\varepsilon_{1} &= \frac{1}{\sqrt{2\pi p^{3}_{0j}}}\int^{\infty}_{-\infty} w_{j}g(w_{j})e^{-\frac{(w_{j}+d)^{2}}{2p_{0j}}}d w_{j},\\ \varepsilon_{2} &= \frac{1}{\sqrt{2\pi p^{3}_{0j}}}\int^{\infty}_{-\infty}g(w_{j})e^{-\frac{(w_{j}+d)^{2}}{2p_{0j}}}d w_{j}.
\end{aligned}
\end{equation}

\section{Proof of Equation (\ref{eq:108})}
To prove (\ref{eq:108}), we may formulate $\mathbb{E}\{w_{i}y_{i}\}$ as
\begin{equation}
\label{eq:109}
\begin{aligned}
\mathbb{E}\{w_{i}y_{i}\} &= \int_{-\infty}^{\infty} w_{i}f(w_{i})p(w_{i}) \,dw_{i}\\ &= \int_{0}^{\infty} w_{i}p(w_{i}) \,dw_{i} - \int_{-\infty}^{0} w_{i}p(w_{i}) \,dw_{i}\\ &= \int_{0}^{\infty} w_{i} \left(p(w_{i})+p(-w_{i})\right) \,dw_{i},
\end{aligned}
\end{equation}
where $p(w_{i})=\left(\sqrt{2\pi p_{0i}}\right)^{-1} e^{\frac{-\left(w_{i}+d\right)^{2}}{2p_{0i}}}$. Let $I_{1}=\int_{0}^{\infty}w_{i}e^{\frac{-\left(w_{i}+d\right)^{2}}{2p_{0i}}} \,dw_{i}$ and $I_{2}=\int_{0}^{\infty}w_{i}e^{\frac{-\left(-w_{i}+d\right)^{2}}{2p_{0i}}} \,dw_{i}$. Then, we have
\begin{equation}
\label{eq:109n}
\begin{aligned}
I_{1} &=\int_{0}^{\infty}w_{i}e^{\frac{-\left(w_{i}+d\right)^{2}}{2p_{0i}}} \,dw_{i} = \int_{0}^{\infty}\left(w_{i}+d\right)e^{\frac{-\left(w_{i}+d\right)^{2}}{2p_{0i}}} \,dw_{i}\\&-d\int_{0}^{\infty}e^{\frac{-\left(w_{i}+d\right)^{2}}{2p_{0i}}} \,dw_{i} = d\sqrt{\frac{\pi p_{0i}}{2}}\left(-1+\operatorname{erf}\left(\frac{d}{\sqrt{2p_{0i}}}\right)\right)\\&+p_{0i}e^{-\frac{d^{2}}{2p_{0i}}},\\I_{2}&=\int_{0}^{\infty}w_{i}e^{\frac{-\left(-w_{i}+d\right)^{2}}{2p_{0i}}} \,dw_{i} = \int_{0}^{\infty}\left(w_{i}-d\right)e^{\frac{-\left(-w_{i}+d\right)^{2}}{2p_{0i}}} \,dw_{i}\\&+d\int_{0}^{\infty}e^{\frac{-\left(-w_{i}+d\right)^{2}}{2p_{0i}}} \,dw_{i} = d\sqrt{\frac{\pi p_{0i}}{2}}\left(1+\operatorname{erf}\left(\frac{d}{\sqrt{2p_{0i}}}\right)\right)\\&+p_{0i}e^{-\frac{d^{2}}{2p_{0i}}}.
\end{aligned}
\end{equation}
Therefore, based on (\ref{eq:109n}), a simplified form  of the integration in (\ref{eq:109}) can be proposed as
\begin{equation}
\label{eq:109nn}
\mathbb{E}\{w_{i}y_{i}\} = d\operatorname{erf}\left(\frac{d}{\sqrt{2p_{0i}}}\right)+\sqrt{\frac{2p_{0i}}{\pi}}e^{-\frac{d^{2}}{2p_{0i}}}.
\end{equation}
Since $\Gamma\left(1,x\right)=e^{-x}$ and $\Gamma\left(\frac{1}{2},x\right)=\sqrt{\pi}\left(1-\operatorname{erf}\left(\sqrt{x}\right)\right)$, (\ref{eq:109nn}) is identical to (\ref{eq:108}).
\bibliographystyle{IEEEbib}
\bibliography{strings,refs}

\end{document}